\documentclass[12pt]{article}
\usepackage{amsmath, amssymb, amsthm}
\usepackage[margin=1in]{geometry}
\usepackage{natbib}
\usepackage{color}
\usepackage[hidelinks]{hyperref}
\usepackage{graphicx}
\usepackage{booktabs}
\usepackage{multirow}
\usepackage{subcaption}
\newtheorem{proposition}{Proposition}

\newcommand{\RAH}{\mathrm{RAH}}

\newcommand{\DRMST}{D\text{-}\mathrm{RMST}}
\newcommand{\dd}{\,d}
\newcommand{\TauSim}{40}
\newcommand{\TauCM}{24}
\newcommand{\CMAdminLower}{13}
\newcommand{\CMAdminUpper}{28}
\newcommand{\CMDropoutRate}{0.005}
\newcommand{\CMDropoutAnnualPct}{6}
\newcommand{\TBFading}{43}
\newcommand{\ScTauLight}{0.154}
\newcommand{\ScTauHeavy}{0.058}

\newcommand{\AreRahCoxCM}{1.067}
\newcommand{\AreRahDrmstCM}{1.372}

\newcommand{\CosSqCoxCM}{0.603}
\newcommand{\CosSqDrmstCM}{0.469}
\newcommand{\CosSqRahCM}{0.643}

\newcommand{\PowerCoxCMPct}{85}
\newcommand{\PowerDrmstCMPct}{72}
\newcommand{\PowerRahCMPct}{89}
\newcommand{\NPowerCoxCM}{300}
\newcommand{\NPowerDrmstCM}{400}
\newcommand{\NPowerRahCM}{250}
\newcommand{\NPowerCoxMorePct}{20}
\newcommand{\NPowerDrmstMorePct}{60}
\newcommand{\SchoenfeldPCM}{0.002}
\newcommand{\WeibullShapeControl}{1.264}
\newcommand{\WeibullScaleControl}{13.076}
\newcommand{\WeibullShapeTreatment}{0.987}
\newcommand{\WeibullScaleTreatment}{17.320}

\newcommand{\SzeroCMTau}{0.12}
\newcommand{\AreRahCoxPHHeavyLowS}{0.798}
\newcommand{\AreRahDrmstPHHeavyLowS}{0.812}
\newcommand{\AreRahCoxPHShapeHalf}{0.922}
\newcommand{\AreRahCoxPHShapeTwo}{0.737}
\newcommand{\AreDrmstCMTauEighteen}{0.534}

\newcommand{\AreDrmstCMTauTwentyfour}{0.778}
\newcommand{\AreRahCMTauEighteen}{1.063}
\newcommand{\AreRahCMTauTwentyone}{1.080}
\newcommand{\AreRahCMTauTwentyfour}{1.067}
\newcommand{\AreRahDrmstCMTauEighteen}{1.990}
\newcommand{\AreRahDrmstCMTauTwentyfour}{1.372}
\newcommand{\AreDrmstCMDropoutLow}{0.749}
\newcommand{\AreDrmstCMDropoutHigh}{0.891}
\newcommand{\AreRahCMDropoutLow}{1.047}
\newcommand{\AreRahCMDropoutHigh}{1.136}
\newcommand{\AreRahDrmstCMDropoutLow}{1.398}
\newcommand{\AreRahDrmstCMDropoutHigh}{1.274}
\newcommand{\AreDrmstCMPH}{0.988}
\newcommand{\AreRahCMPH}{0.909}

\title{A Cosine-Projection Framework for Choosing Among\\
       Cox HR, D-RMST, and Average-Hazard Tests in Survival Trials}
\author{Xiang Meng, Lu Tian, Hajime Uno}
\date{October 10, 2026}

\begin{document}
\maketitle

\begin{abstract}
The log-rank/Cox hazard-ratio test remains the standard approach for the primary analysis of randomized trials with time-to-event endpoints and is fully efficient under proportional hazards (PH). Two model-free alternatives that have recently gained attention are the difference in restricted mean survival time (D-RMST), a contrast on the time scale, and the more recently proposed ratio of average hazards (RAH), a contrast on the rate scale. Simulation studies suggest that the RAH-based test can outperform both under delayed treatment effects, but its efficiency has not been characterized analytically, and no unified framework explains how censoring and accrual affect the efficiency ordering among these tests.
We show that all three statistics share a weighted-log-rank representation and derive their asymptotic relative efficiencies under local alternatives, including the first analytic efficiency results for the RAH-based test.
Geometrically, each test defines a direction in a function space and the anticipated treatment effect defines a signal direction. A test's efficiency relative to the oracle test equals the squared cosine of the angle between these directions, so the asymptotic relative efficiency of any two tests is a ratio of squared cosines. Because censoring and accrual enter through an information weight, the framework shows how design choices can alter, and even reverse, the efficiency ordering.
We illustrate the framework using the CheckMate~057 immunotherapy trial and corroborate the theoretical results through simulations spanning proportional- and non-proportional-hazards settings.
\end{abstract}

\section{Introduction}
\label{sec:intro}

Non-proportional hazards arise in randomized clinical trials with time-to-event endpoints, particularly in immuno-oncology, where treatment effects may be delayed and hazards may cross \citep{Horiguchi2019,Uno2023}. In such settings, the choice of the primary analysis method is an important design decision.
This paper compares three tests that a trialist might reasonably consider.
The first is the conventional test based on the Cox proportional hazards model, asymptotically equivalent to the log-rank test. For the primary analysis of overall survival, the 2025 FDA draft guidance recommends the log-rank test when there is alpha allocation, accompanied by the hazard ratio (HR) from a Cox model as the summary measure \citep{FDA2025OS}, and the test is asymptotically fully efficient when the proportional hazards (PH) assumption holds.
The second is the test based on the difference in restricted mean survival time (D-RMST) up to a prespecified time horizon, an established alternative that compares survival on the time scale without requiring the PH assumption \citep{Royston2013,Uno2014,Tian2014,survRM2}.
The third is the test based on the ratio of average hazards (RAH), proposed recently by \citet{Uno2023}. Average hazard summarizes the population-level event rate per unit of event-free person-time over a prespecified time horizon, and the RAH compares these rates between groups. Like D-RMST, RAH is defined without requiring the PH assumption, but it summarizes the between-group contrast on the rate scale rather than the time scale \citep{Uno2023,meng2026efficient}.

The choice among these tests can have substantial consequences at the design stage. Power calculations for time-to-event endpoints require specification of the baseline survival distribution, the anticipated treatment-effect pattern, the censoring and accrual distributions, and the analysis time horizon. The relative efficiency of the Cox-, D-RMST-, and RAH-based tests therefore depends on these design features. This motivates the practical question: \emph{given these design specifications, which of the three tests is the most efficient?}

Previous work has examined this question from both theoretical and empirical perspectives. On the theoretical side, \citet{Tian2018} compared the log-rank and D-RMST tests through their asymptotic relative efficiency (ARE) under local alternatives. They showed that the D-RMST test is nearly as efficient as the log-rank test under PH, can be more efficient under some non-PH alternatives, but can lose efficiency when the treatment effect emerges primarily later in follow-up.
On the empirical side, \citet{Uno2023} compared average-hazard-based tests with the Cox- and RMST-based tests through simulation under PH, early-difference, and delayed-difference patterns and under different censoring distributions.
Their results showed that the relative performance of the tests can vary substantially with both the treatment-effect pattern and the censoring distribution.
Despite these developments, the RAH-based test has not been placed on the same theoretical efficiency scale as the Cox- and D-RMST-based tests. Its power properties have been documented by simulation, but no analytic expression for its efficiency is available, and no unified framework explains when and why the efficiency ordering among the three tests arises or how censoring and accrual can alter that ordering.

We develop such a unified theoretical framework. We first show that all three statistics admit a common weighted-log-rank representation and derive their asymptotic relative efficiencies under local alternatives, including the first analytic efficiency results for the RAH-based test.
Although the resulting weight functions indicate where each test places emphasis over follow-up, their shapes alone do not determine relative efficiency because the amount of information available at each time also depends on the trial design. We therefore recast the comparison geometrically. Each test corresponds to a direction in a function space determined jointly by its weight function and the design's information weight, while the treatment-effect pattern defines a signal direction in the same space. The efficiency of a test relative to the oracle test for that alternative is exactly the squared cosine of the angle between these two directions, and the ARE of any two tests is therefore a ratio of squared cosines. This \emph{cosine metric} turns efficiency comparisons into statements about alignment and makes explicit how censoring and accrual can alter, and even reverse, the efficiency ordering among tests.

We evaluate the theoretical results through simulations under PH and non-PH scenarios, including fading and delayed treatment effects, and illustrate the design-stage use of the framework in a case study based on reconstructed CheckMate~057 data. These analyses assess how well the theoretical efficiency comparisons predict empirical performance across different treatment-effect patterns and trial designs.

The remainder of the paper is organized as follows. Section~\ref{sec:setup} defines the three test statistics, derives their common weighted-log-rank representation, and presents their asymptotic efficiencies under local alternatives. Section~\ref{sec:cosine} develops the cosine metric and its geometric interpretation. Section~\ref{sec:sim} reports the simulation studies, and Section~\ref{sec:cm057-analysis} presents the CheckMate~057 design analysis. Section~\ref{sec:discussion} concludes with practical implications and directions for future work.

\section{Three Test Statistics, a Weighted-Log-Rank Representation, and Their Efficiency}
\label{sec:setup}

\subsection{Setup and notation}
\label{sec:notation}

Let $T_k$ ($k=0,1$) denote the event time in group $k$ with survival
function $S_k(t)$, cumulative hazard function $\Lambda_k(t)$, and hazard function $\lambda_k(t)$.
Define
\begin{equation}
  F_k(t)=1-S_k(t)\qquad \mbox{and} \qquad R_k(t)=\int_0^t S_k(u)\dd u.
\end{equation}
Thus, $F_k(t)=\Pr(T_k\le t)$ is the cumulative incidence (the probability
that the event has occurred by time $t$) and $R_k(\tau)$ is the RMST on $(0,\tau]$.
Let $C_k$ be the censoring time, independent of $T_k$, with
$G_k(t)=\Pr(T_k\wedge C_k\ge t)$ bounded away from zero on $(0,\tau]$.
Let $p_k$ be the limiting proportion of subjects in group $k$ $\left(p_0+p_1=1\right)$.
Lastly, the at-risk weight function is defined as
\begin{equation}\label{eq:w}
  w(t)=\frac{p_0\,p_1\,G_0(t)\,G_1(t)}{p_0\,G_0(t)+p_1\,G_1(t)}.
\end{equation}

Two simplifying conditions recur throughout the paper.
By \emph{equal allocation} we mean a balanced design with $p_0=p_1=1/2.$
By \emph{a common censoring distribution} we mean that the two groups
share the same marginal censoring law, and we write
$S_C(v)=\Pr(C\ge v)$ for the common censoring survival function;
this is the case for administrative censoring.
Under these two conditions, the at-risk weight \eqref{eq:w} becomes $w(t)=0.5\times S_C(t)/\{S_0(t)^{-1}+S_1(t)^{-1}\}.$ If the null hypothesis, $S_1(t)=S_0(t), t\in [0, \tau],$ is also true, then  the weight function further reduces to $w(t)=S_0(t)S_C(t)/4$.

\subsection{The three test statistics}
\label{sec:three-stats}

We can test $H_0:S_1(t)=S_0(t), t\in [0, \tau]$ against alternatives using three
statistics.  For each statistic, we state the corresponding estimand, point estimator, and asymptotic distribution of the test statistic.
Throughout, $\hat S_k(\cdot)$ and
$\hat\Lambda_k(\cdot)$ denote the Kaplan--Meier and Nelson--Aalen estimators of
$S_k(\cdot)$ and $\Lambda_k(\cdot)$, respectively, $\hat G_k(\cdot)$ denote the empirical counterpart of $G_k(\cdot)$, and $\hat{R}_k(t)$ denote $\int_0^t \hat{S}_k(u)du$, a consistent estimator of $R_k(t).$

\paragraph{Cox's HR-based test.}
Under the PH assumption,
$$\theta=\log\left\{\frac{\lambda_1(t)}{\lambda_0(t)} \right\}$$
 does not depend on $t$ and can be consistently estimated by the maximum partial likelihood estimator $\hat{\theta}.$ As $n\to \infty,$
\[
  \sqrt n(\hat\theta-\theta)\xrightarrow{d}N(0,\sigma_\theta^2),
\]
where
\begin{equation}\label{eq:var-cox}
  \sigma_\theta^2
    =\left\{\int_0^{\tau}
       \frac{e^{\theta}p_0p_1\,G_0(u)\,G_1(u)}
            {p_0\,G_0(u)+e^{\theta}p_1\,G_1(u)}\,\dd\Lambda_0(u)
      \right\}^{-1}.
\end{equation}
Under $H_0:\lambda_1(\cdot)=\lambda_0(\cdot)$,
$$\sigma_\theta^2=\left\{\int_0^{\tau}w(u)\dd\Lambda_0(u)\right\}^{-1},$$
 and the
Wald statistic
\(
  Z_{\mathrm{Cox}}
    =\sqrt n\,\hat\theta/\hat\sigma_\theta
\)
is asymptotically standard normal, where $\hat{\sigma}_\theta^2$ is a consistent estimator of $\sigma_\theta^2.$

 We use $Z_{\rm Cox}$ for all numerical studies in this paper, whereas its theoretical properties were examined based on an asymptotically equivalent score statistic, i.e., the
log-rank statistic. The details are given in Section~\ref{sec:wlr} and Appendix~\ref{sec:appendix-weights}.

\paragraph{D-RMST-based test.}
The \textbf{D-RMST} test is based on the difference between two estimated RMSTs:
\begin{equation}\label{eq:rmst-stat}
  \hat\theta_{\DRMST}=\hat R_1(\tau)-\hat R_0(\tau)
            =\int_0^\tau\bigl[\hat S_1(t)-\hat S_0(t)\bigr]\dd t,
\end{equation}
the area between the two Kaplan--Meier curves up to $\tau$
\citep{Uno2014,survRM2}. Let $\theta_{\DRMST}=R_1(\tau)-R_0(\tau)$ be the true RMST difference. It is a clinically interpretable estimand, and oftentimes preferred when the PH assumption may be violated.  Then, $\hat\theta_{\DRMST}$ is a consistent estimator of $\theta_{\DRMST}$ and
$\sqrt{n}(\hat\theta_{\DRMST}-\theta_{\DRMST})
\xrightarrow{d}N(0,\sigma_{\DRMST}^2)$ with
\begin{equation*}\label{eq:var-rmst}
  \sigma_{\DRMST}^2
    =\sum_{k=0}^{1}
     \int_0^\tau\!\left\{\int_v^\tau S_k(u)\dd u\right\}^{\!2}
     \frac{\dd\Lambda_k(v)}{p_k\,G_k(v)},
\end{equation*}
as $n \to \infty$
\citep{Tian2018}. Under the null, the Wald statistic
\[
  Z_{\DRMST}
    =\sqrt n\,\hat{\theta}_{\DRMST}/\hat\sigma_{\DRMST}
\]
is asymptotically standard normal, where $\hat{\sigma}_{\DRMST}^2$ is a consistent estimator of $\sigma_{\DRMST}^2.$

\paragraph{RAH-based test.}
The AH for group $k$ on $(0,\tau]$ is
\begin{equation*}\label{eq:AH}
  \eta_k(\tau)=\frac{F_k(\tau)}{R_k(\tau)}
              =\int_0^\tau\lambda_k(t)\,
               \frac{S_k(t)}{\int_0^\tau S_k(u)\dd u}\dd t,
\end{equation*}
a weighted average of hazard function over $[0, \tau]$ \citep{Uno2023}. %
$\eta_k(\tau)$ is a population-level person-time event rate.
Two properties make
AH attractive as a robust summary on the hazard scale.  First, $\eta_k(\tau)$ is defined purely through the event-time distribution
and is therefore not affected by the censoring distribution. On the other hand, the commonly used person-time incidence rate $\Pr(T\le C)/\mathbb{E}(T\wedge C)$ depends on both survival and censoring distributions \citep{Uno2023}.
Second, AH respects stochastic ordering: if $S_1(t)\ge S_0(t)$ for all
$t\le\tau$, then $\eta_1(\tau)\le\eta_0(\tau)$.
The ratio of AH (RAH) between two groups is
\begin{equation}\label{eq:RAH}
  \theta_{\RAH}
    =\frac{\eta_1(\tau)}{\eta_0(\tau)}
    =\frac{F_1(\tau)/R_1(\tau)}{F_0(\tau)/R_0(\tau)},
\end{equation}
with $\theta_{\RAH}=1$ under $H_0$.  Its estimator $\hat\theta_{\RAH}$ can be obtained by
replacing $F_k(\tau)$ and $R_k(\tau)$ with $1-\hat{S}_k(\tau)$ and $\hat R_k(\tau)$ in \eqref{eq:RAH}, respectively. As $n \to \infty$, $\sqrt{n}(\log\hat\theta_{\RAH}-\log\theta_{\RAH})\xrightarrow{d}N(0,\sigma_{\RAH}^2)$, \citep{Uno2023},
where
\begin{equation}\label{eq:var-RAH}
  \sigma_{\RAH}^2
    =\sum_{k=0}^{1}
     \int_0^\tau
     \!\left\{\frac{1}{F_k(\tau)}-\frac{R_k(u)}{R_k(\tau)}\right\}^{\!2}
     \frac{\dd\Lambda_k(u)}{G_k(u)\,p_k}.
\end{equation}
The test statistic $Z_{\RAH}=\sqrt{n}\log(\hat\theta_{\RAH})/\hat\sigma_{\RAH}$
is asymptotically $N(0,1)$ under $H_0$, where $\hat{\sigma}_{\RAH}^2$ is a consistent estimator of $\sigma^2_{\RAH}.$

\subsection{Weighted-log-rank representation}
\label{sec:wlr}

All three statistics, to the first order, can be written as a weighted contrast of
Nelson--Aalen estimates of cumulative hazard functions from two groups:
\begin{equation}\label{eq:wlr-generic}
  U_W=\int_0^\tau W(t)d\hat{\Delta}_\Lambda(t),
\end{equation}
 where
\[
  \hat{\Delta}_\Lambda(v)=\hat\Lambda_1(v)-\hat\Lambda_0(v).
\]
Under $H_0,$ the two cumulative hazards coincide and $U_W$ has mean zero. The
weight $W(\cdot)$ decides which time periods contribute to the comparison, and
choosing $W(\cdot)$ to match the anticipated alternative is the classical route to
constructing powerful, distribution-free survival tests \citep{FlemingHarrington1991,Tian2018}. Rigorous arguments for expressing three aforementioned test statistics in this form can be found in Appendix~\ref{sec:appendix-weights}.

\paragraph{Cox's HR-based test.}
The log-rank statistic, the score statistic under Cox model, can be written as
\begin{equation}\label{eq:cox-stat}
  U_{\mathrm{Cox}}
    =\int_0^\tau \frac{p_1p_0\hat{G}_1(t)\hat{G}_0(t)}{p_1\hat{G}_1(t)+p_0\hat{G}_0(t)}
     d\hat{\Delta}_\Lambda(t).
\end{equation}
 Under
the local alternatives considered in section \ref{sec:efficiency}, a one-step expansion of the score function from the partial likelihood function around $\theta=0$ yields that
$\sqrt n\,\hat\theta=\sigma_\theta^2\sqrt n\,U_{\mathrm{Cox}}+o_p(1)$. Thus, the Wald statistic,  $Z_{\mathrm{Cox}}$ and the normalized $\sqrt{n}U_{\mathrm{Cox}}$ are asymptotically equivalent. Furthermore,
the random weight in \eqref{eq:cox-stat} converges to the at-risk weight $w(\cdot)$ in
\eqref{eq:w} in probability, as $n \to \infty.$

\paragraph{D-RMST- and RAH- based tests.}
Neither statistic is a log-rank type statistic by construction, but each is a
smooth functional of $\hat\Lambda_1(\cdot)$ and
$\hat\Lambda_0(\cdot)$.
For D-RMST, $R(\Lambda)=\int_0^\tau \exp\{-\Lambda(t)\}\dd t$, and under local alternative,
\begin{align*}
\hat{\theta}_{\DRMST}
=&\int_0^\tau S_0(t)\left[\exp\left\{-\hat{\Delta}_\Lambda(t)\right\}-1\right]\dd t+O\bigl(\|\hat{\Delta}_\Lambda\|^2\bigr)\\
=& -\int_0^\tau S_0(t)\,\hat{\Delta}_\Lambda(t)\dd t +O\bigl(\|\hat{\Delta}_\Lambda\|^2\bigr)\\
=&-\int_0^\tau \left\{R_0(\tau)-R_0(t)\right\}d\hat{\Delta}_\Lambda(t)+O\bigl(\|\hat{\Delta}_\Lambda\|^2\bigr) \\
=&-U_{\DRMST}+O_p\bigl(n^{-1}\bigr).
\label{eq:drmst-expansion}
\end{align*}
 As the remainder is $O_p(n^{-1})$, $Z_{\DRMST}$ and the normalized $\sqrt{n}U_{\DRMST}$ are asymptotically equivalent up to sign.  RAH can be handled the same way, with an extra delta-method step for the ratio $F_k(\tau)/R_k(\tau)$, and
\begin{align*}
\log(\hat{\theta}_{\RAH})=& \int_0^\tau \left[\frac{S_0(\tau)}{F_0(\tau)}
              +\frac{ R_0(\tau)-R_0(t)}{R_0(\tau)}\right]d\hat{\Delta}_\Lambda(t)+O\bigl(\|\hat{\Delta}_\Lambda\|^2\bigr)\\
              =& U_{\RAH}+O_p(n^{-1})
\end{align*}
under local alternative. As a result, $Z_{\RAH}$ and the normalized $\sqrt{n}U_{\RAH}$ are asymptotically equivalent.

\medskip
In summary, under equal allocation and a common censoring assumption, the
resulting weights for $U_{\mathrm Cox}$, $U_{\DRMST}$ and $U_{\RAH}$ are
\begin{align*}
  W_{\mathrm{Cox}}(t)
    &=\;S_0(t)\,S_C(t)/4,\\
  W_{\DRMST}(t)
    &= R_0(\tau)-R_0(t), \mbox{ and}\\
  W_{\RAH}(t)
    &= \left[\frac{S_0(\tau)}{F_0(\tau)}
              +\frac{ R_0(\tau)-R_0(t)}{R_0(\tau)}\right],
\end{align*}
respectively. Since multiplying the weight function by any nonzero constant yields an equivalent test statistic, without loss of generality, we may use following equivalent weight functions
\begin{align*}
  W_{\mathrm{Cox}}(t)=&S_0(t)\,S_C(t)\\
  W_{\DRMST}(t)=&\frac{ R_0(\tau)-R_0(t)}{R_0(\tau)}\\
  \mbox{ and }~
  W_{\RAH}(t)=&S_0(\tau)+\left\{1-S_0(\tau)\right\}W_{\DRMST}(t)
\end{align*}
for three test statistics.
On the current scale, all three weights equal one at $t=0$, their limits at
$t=\tau$ are $S_0(\tau)S_C(\tau)$, $0$, and $S_0(\tau)$. Also, weight for $U_{\RAH}$ is a simple linear transformation of that for $U_{\DRMST}.$ Note that all $U_X$ are oriented so that $U_X<0$ indicates benefit of treatment (hence $U_{\DRMST}=-\hat\theta_{\DRMST}$ to first order), where $X=$ ``Cox", ``D-RMST" and ``RAH".

The Cox HR-based log-rank weight tracks the amount of information in the risk set. It is the largest early in follow-up, when the risk set is large, and decreases over time as deaths and censoring deplete the risk set. Under the PH assumption, this weighting is optimal. The D-RMST weight represents the RMST within the window \([t, \tau]\). It also places greater weight on earlier follow-up, which decreases monotonically to zero at \(\tau\), thereby also down-weighting survival differences in late follow-up. The weight for RAH is likewise monotone decreasing, but remains strictly positive at \(\tau\), and therefore places relatively greater emphasis on late-stage survival differences than the weight function for D-RMST. This common representation allows us to formally examine the test efficiency in the next section.

\subsection{Asymptotic efficiency}
\label{sec:efficiency}

To compare the three tests on a single scale we use the
\emph{Pitman efficacy}, the standard local-power measure for asymptotically
normal tests \citep{Pitman1949}.  Following \citet{Tian2018}, we embed the
two-sample problem in a sequence of contiguous (local) alternatives indexed
by a fixed log-hazard-ratio shape $\alpha(\cdot)$ on $(0,\tau]$,
\begin{equation}\label{eq:local-alt-main}
  H_{An}:\qquad
  \log\!\left\{\frac{\lambda_{1n}(t)}{\lambda_0(t)}\right\}
    =\frac{\alpha(t)}{\sqrt{n}},
\end{equation}
so that the two hazards coincide as $n\to\infty$, while the between-group log-hazard difference is of order $O(n^{-1/2})$.  Under \eqref{eq:local-alt-main}, for
test $X\in\{\mathrm{Cox},\DRMST,\RAH\},$ the test statistic is asymptotically normal, i.e.,
\begin{equation*}
  \sqrt{n}\,U_X\xrightarrow{d}N\!\bigl(\mu_X,\;\sigma_X^2\bigr),
\end{equation*}
where
\begin{equation*}
  \mu_X=\int_0^\tau W_X(t)\,\alpha(t)\,\lambda_0(t)\dd t
  \qquad\mbox{and}\qquad
  \sigma_X^2=\int_0^\tau \frac{W_X(t)^2}{w(t)}\,\lambda_0(t)\dd t .
\end{equation*}
The drift $\mu_X$ is obtained by using the first-order mean increment
$\alpha(t)\lambda_0(t)\dd t$ of
$\sqrt n\,d\hat\Delta_{\Lambda}(t)$ under the local alternative
in \eqref{eq:wlr-generic}. The variance $\sigma_X^2$
is the null variance of
\eqref{eq:wlr-generic}. %
The \emph{asymptotic efficiency} (AE), or Pitman
efficacy, of test $X$ is the squared standardized drift
\begin{equation}\label{eq:efficacy-def}
  \xi_X=\frac{\mu_X^2}{\sigma_X^2},
\end{equation}
which is the signal to noise ratio governing the local power. The derivation of $\mu_X$ for RAH-based test can be found in Appendix \ref{sec:appendix-proofs}. Furthermore, the
explicit expressions of
$\xi_{\mathrm{Cox}},$ $\xi_{\DRMST}$ and $\xi_{\RAH}$ are provided in Appendix \ref{sec:appendix-pitman}.
The \emph{asymptotic relative efficiency} (ARE) of test $X$ versus test $Y$
is the ratio of corresponding AEs:
\begin{equation}\label{eq:are-def}
  \mathrm{ARE}(X,Y)=\frac{\xi_X}{\xi_Y},
\end{equation}
which approximates the inverse ratio of the sample sizes these two tests
require to reach the same local power against \eqref{eq:local-alt-main};
$\mathrm{ARE}(X,Y)>1$ means that test $X$ is more efficient and needs proportionally fewer subjects than $Y$ \citep{Tian2018,Pitman1949}.  Because ARE may be a poor proxy for the sample size ratio in finite-sample setting, we also report the \emph{empirical relative efficiency} (ERE), the
ratio of squared standardized effect sizes estimated by simulation (Section~\ref{sec:sim}),
\begin{equation}\label{eq:ere-def}
  \mathrm{ERE}(X,Y)
    =\frac{E(U_X)^2/\mathrm{Var}(U_X)}
          {E(U_Y)^2/\mathrm{Var}(U_Y)},
\end{equation}
with the expectations and variances approximated by their Monte Carlo averages \citep{Tian2018}.  %

\section{The geometric interpretation of AE}
\label{sec:cosine}
Section~\ref{sec:wlr} reduced all three tests to the weighted logrank test~\eqref{eq:wlr-generic} and showed that their AE depends on the test only through the weight function.  This is the basis
of the classical comparison of \citet{Tian2018}, who contrast the Cox and
D-RMST tests by inspecting their weight functions $W_{\mathrm{Cox}}(\cdot)$ and
$W_{\DRMST}(\cdot)$ directly. In this section, we replace the side-by-side weight comparison with a geometric quantity, a cosine, turning AE into a function of an angle in a function space.

\subsection{Signal direction and test direction}
\label{sec:directions}

Suppose that all candidate statistics are evaluated on the same horizon $\tau$, and consider the function space $L^2([0,\tau], \lambda_0(v)\dd v)$, where the inner product
\begin{equation}\label{eq:inner-product}
  \langle f,g\rangle=\int_0^\tau f(v)\,g(v)\,\lambda_0(v)\dd v,
 \end{equation}
and norm $\|f\|^2=\langle f,f\rangle .$ Next, we
define the \emph{signal direction} for local alternative $\alpha(\cdot)/\sqrt{n}$ and \emph{test direction} for test $X\in\{\mathrm{Cox},\DRMST,\RAH\}$, as
\begin{equation}\label{eq:directions}
  s_\alpha(t)=\alpha(t)\sqrt{w(t)}\qquad \mbox{and} \qquad
  h_X(t)=\frac{W_X(t)}{\sqrt{w(t)}},
\end{equation}
respectively.

The signal direction, $s_\alpha(t),$ describes where the detectable treatment effect is by combining the local log-HR representing the underlying between-group difference at $t$ and $\sqrt{w(t)}$ representing the amount of information available to detect the difference at time $t$.  Thus, $|s_\alpha(t)|$ is large where the true effect is big and sufficient information is available to detect it. It is a ``direction'' since only its shape over $t$, not its overall magnitude, matters. The test direction, $h_X(t)=W_X(t)/\sqrt{w(t)},$ is the analogous
object built from the test's own weight.  $W_X(\cdot)$ records where the test is
prepared to look for the effect. $\sqrt{w(t)}$ in the denominator ensures that $h_X(t)$ represents test weight per unit of information available at time $t$. Next, we have
\[
  \langle s_\alpha,h_X\rangle
    =\int_0^\tau \alpha(t)\sqrt{w(t)}\,\frac{W_X(t)}{\sqrt{w(t)}}\,
      \lambda_0(t)\dd t
    =\int_0^\tau W_X(t)\,\alpha(t)\,\lambda_0(t)\dd t=\mu_X,
\]
\[
\mbox{and}~~~  \|h_X\|^2
    =\int_0^\tau \frac{W_X(v)^2}{w(v)}\,\lambda_0(v)\dd v=\sigma_X^2.
\]
Consequently the AE in \eqref{eq:efficacy-def} can be written as
\begin{equation}\label{eq:eff-inner}
  \xi_X=\frac{\langle s_\alpha,h_X\rangle^2}{\|h_X\|^2}.
\end{equation}

\subsection{The cosine metric}
\label{sec:cosine-metric}

Writing the inner product in \eqref{eq:eff-inner} as

$$
\langle s_\alpha,h_X\rangle=\|s_\alpha\|\,\|h_X\|\cos(\theta_X),
$$
or equivalently,
\begin{equation}\label{eq:eff-cos}
\cos^2(\theta_X)=\frac{\langle s_\alpha,h_X\rangle^2}
{\|s_\alpha\|^2\|h_X\|^2}
=\frac{\xi_X}{\|s_\alpha\|^2},
\end{equation}
where \(\theta_X\) denotes the angle between the signal direction \(s_\alpha\) and the test direction \(h_X\). Intuitively, \(\theta_X\) measures the mismatch between where the test places its weight and where the detectable treatment effect is concentrated: an angle near zero or $\pi$ indicates strong alignment up to sign and high local power. The quantity \(\|s_\alpha\|^2\) is the AE of the oracle test using the optimal weight \(\alpha(t)w(t)\). Therefore, $\cos^2(\theta_X)$ is the ARE of test \(X\) relative to the oracle test, $O$. In general, the ARE of test $X$ vs test $Y$ can be expressed as
\begin{equation}\label{eq:are-cosine}
  \mathrm{ARE}(X, Y)=\frac{\cos^2(\theta_X)}{\cos^2(\theta_Y)}=\frac{\mathrm{ARE}(X, O)}{\mathrm{ARE}(Y, O)}.
\end{equation}

\medskip
As a preview, Figure~\ref{fig:cosine-geometry} provides geometric representations for three data-generating processes considered in \citet{Tian2018}. Specifically, the left column displays \(s_\alpha(t)\), \(h_{\mathrm{Cox}}(t)\), \(h_{\mathrm{DRMST}}(t)\), and \(h_{\mathrm{RAH}}(t)\), whereas the right column depicts the corresponding directions on a compass, with the signal direction fixed at 12 o'clock and each test's \(\cos^2\theta_X\) shown at the tip of the corresponding arrow for test direction. More detailed results are presented in Section~\ref{sec:sim}.

\begin{figure}[ht]
  \centering
  \includegraphics[width=0.95\linewidth]{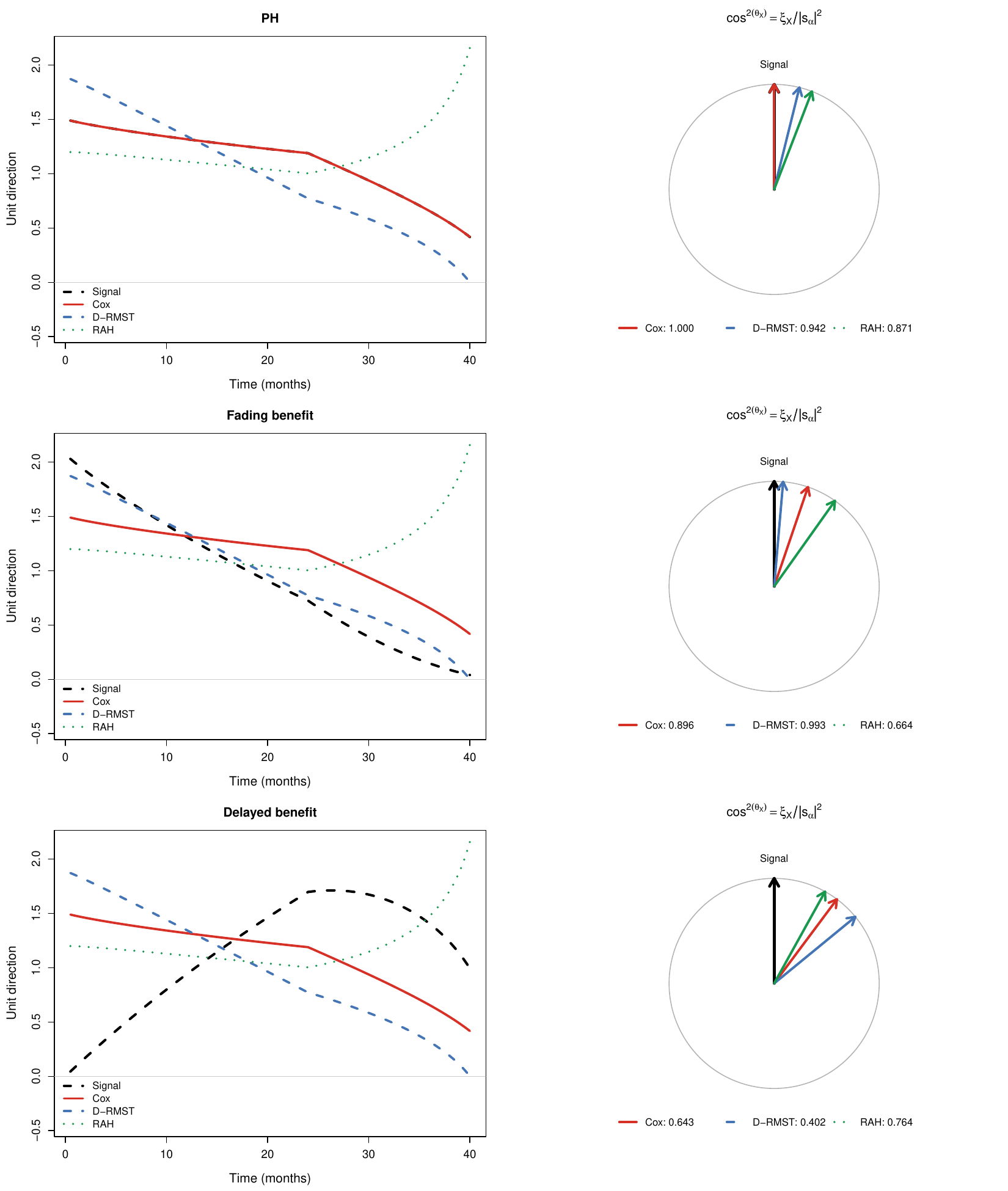}
  \caption{Unit-vector functions (left column) and compass diagrams
  (right column) for three scenarios
  (Weibull baseline: $\kappa=0.826$, $S_0(40)=0.50$, light censoring).
  All three tests are evaluated at
  $\tau=40$ months; %
  Signal direction is placed at 12 o'clock in the
  compass diagrams; each test arrow is annotated with its $\cos^2(\theta_X)$ value.
  Under PH (top), Cox is aligned, while D-RMST and RAH lose some efficiencies.
  Under delayed benefit (bottom), the signal rotates toward the RAH
  direction, raising $\cos^2(\theta_{\RAH})$ above
  $\cos^2(\theta_{\mathrm{Cox}})$.}
  \label{fig:cosine-geometry}
\end{figure}

The cosine metric naturally applies at the trial design stage. Suppose that one has access to:
\begin{itemize}
\item assumed survival function $S_0(t)$ in the control group;
\item assumed treatment-effect $\alpha(t)$ (e.g., PH, delayed
benefit, etc.);
\item planned censoring and accrual distribution (which affects $w(\cdot)$); and
\item a set of candidate test statistics.
\end{itemize}
Given these inputs, one can compute the signal and test directions and $\cos^2(\theta_X)$ for each candidate test and select the test with the highest $\cos^2(\theta_X)$, the one most aligned with the expected signal. Crucially, censoring and accrual are part of trial design and affect the value of $\theta_X.$ By varying $S_C(t)$, one can trace how the cosine metric changes and understand whether the selected test is robust to design assumptions. We demonstrate this in Section~\ref{sec:cm057-analysis} for CheckMate~057, and specifically trace the preferred test across design choices in a sensitivity analysis.

\section{Simulation Study}
\label{sec:sim}

\subsection{Design}
\label{sec:sim-design}

We compare the Cox HR-, D-RMST-, and RAH-based tests via ARE and ERE using the
same data-generating process of \citet{Tian2018}.
The baseline event-time distribution in control group is Weibull($\kappa$, $\eta_0$)
with scale parameter $\eta_0$ chosen so that $S_0(\TauSim)$ equals a specified target survival rate at $\tau=\TauSim$ months and the default shape parameter $\kappa=0.826$ yielding a decreasing hazard
$\lambda_0(t)\propto t^{-0.174}$ in the control group.
Censoring time $C=\min(E,U)$, where $E\sim\mathrm{Exp}(\lambda_c)$ and
$U\sim\mathrm{Uniform}(24,43)$ represent loss of follow-up and random entry, respectively. Two rates $\lambda_c\in\{0.0006,0.025\}$ for $E$
yield light ($S_C(\TauSim)=\ScTauLight$) and heavy ($S_C(\TauSim)=\ScTauHeavy$)
censoring.  Sample size $n_0=n_1=400$ per arm; $B=2000$ replicates are used for each simulation. In the simulations, all three tests use the same analysis horizon $\tau=40$ months; follow-up for the Cox HR-based test is administratively censored at this horizon.  We consider the following alternatives:

\noindent\textbf{PH scenario (Case 1):}
The treatment event-time distribution is Weibull$(\kappa, \eta_1)$ with $\eta_1=\eta_0/0.70^{1/\kappa}$, yielding HR $=0.70$. We adjust $\eta_0$ so that $S_0(40)$ takes values 0.10, 0.30, 0.50, 0.70, and 0.90.\\
\noindent\textbf{Non-PH scenarios:}
$\kappa=0.826$ and
$S_0(40)=0.50$ for event-time from control group. The log-HR $\alpha(t)=\log[\lambda_1(t)/\lambda_0(t)]$
follows two patterns:
\begin{itemize}
  \item \textbf{Case~2 (fading benefit):}
    $\alpha(t)=-c_0\max(\TBFading-t,\,0)^s, s\in \{0.5, 1, 2\}$.
    The treatment benefit is the biggest early and fades to null at month \TBFading.
  \item \textbf{Case~3 (delayed benefit):}
    $\alpha(t)=-c_0t^s,  s\in \{0.5, 1, 2\}$.
    The treatment benefit is absent at $t=0$ and increases over follow-up.
\end{itemize}
In both cases, $c_0>0$ is selected so that the mean log-HR over $[0,\tau]$ is approximately $\log(0.70)$. For Cases 2 and 3,  event-times in treatment group are generated by numerical inversion of the cumulative hazard, i.e., $T_1 \sim \Lambda^{-1}_1\{\mbox{Exp}(1)\},$ where $\Lambda_1(t)=\int_0^t \lambda_0(u)\exp\{\alpha(u)\}du.$

Figures~\ref{fig:dgp-ph}, \ref{fig:dgp-case2}, and~\ref{fig:dgp-case3}
display the survival curves and HRs for three
cases. %
Each figure uses a four-panel layout: the top-left panel shows
$\exp\{\alpha(t)\}$ (the HR), and the
remaining panels show $S_0(t)$ (control, solid) against $S_1(t)$
(treatment, dashed) for different values of $s$.

Three simulation settings are also studied but the results are not reported in the main paper.
\begin{enumerate}
\item the PH setting with a HR=0.82 (Appendix ~\ref{sec:appendix-hr082});
\item the PH setting with alternative baseline survival distributions (Weibull$(\kappa, \eta_0),$ \(\kappa=0.7, 1\) and 1.5), sample size (150 per arm), and trial horizon $(\tau=10)$ month (Appendix~\ref{sec:appendix-tau10});
\item the crossing-hazard setting with $\alpha(t)=\log(0.46)+(t/43)$ (the HR begins at 0.46 and rises to 1.25 at month 43) (Appendix~\ref{sec:appendix-crossing}).
\end{enumerate}

\medskip
\begin{figure}[ht]
  \centering
  \includegraphics[width=0.95\linewidth]{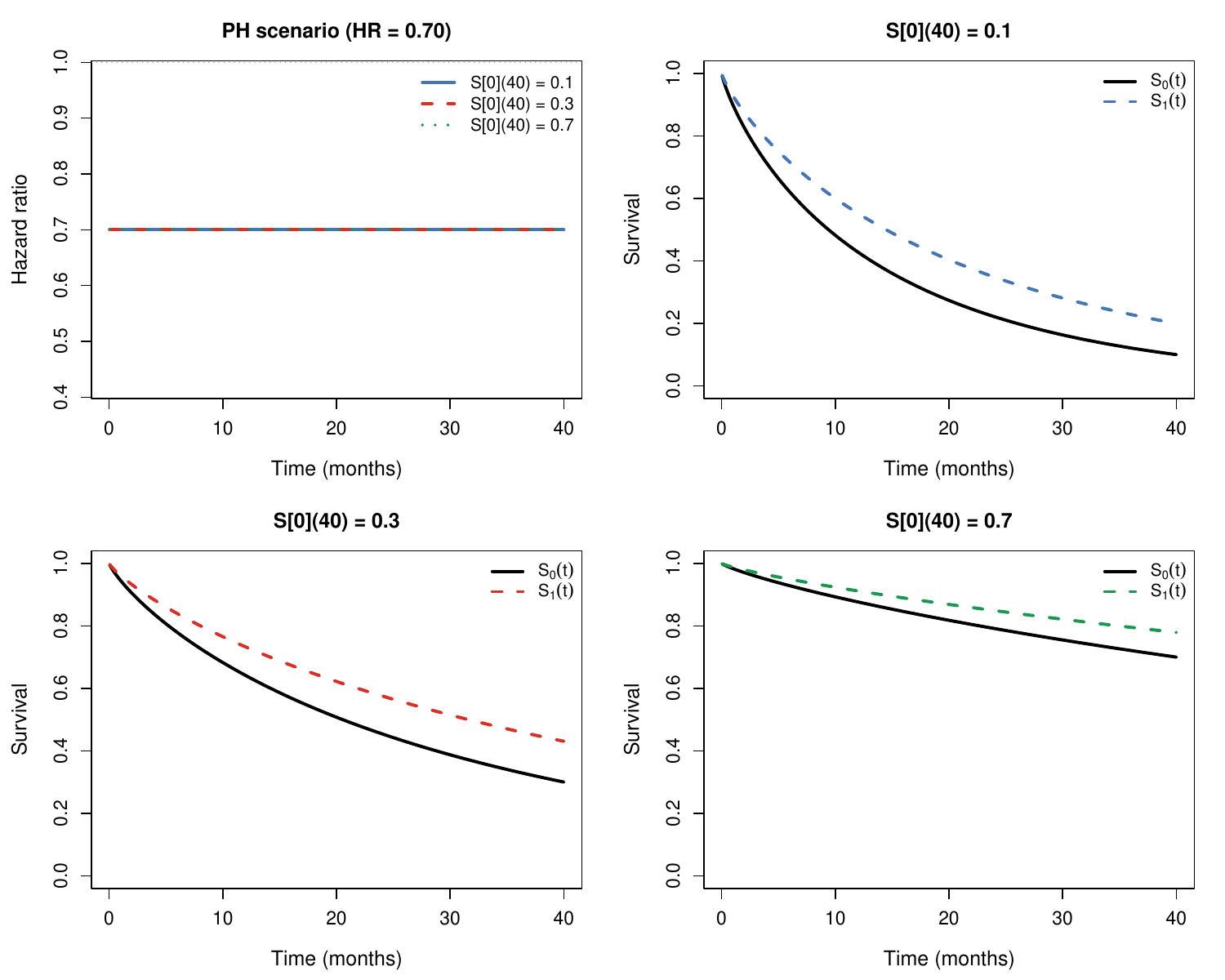}
  \caption{\textbf{PH scenario} (HR $= 0.70$) and
    $S_0(\tau)\in\{0.10, 0.30, 0.70\}$.
    The HR is flat at 0.70; higher event rates produce greater absolute separation between two survival curves.}
  \label{fig:dgp-ph}
\end{figure}

\begin{figure}[ht]
  \centering
  \includegraphics[width=0.95\linewidth]{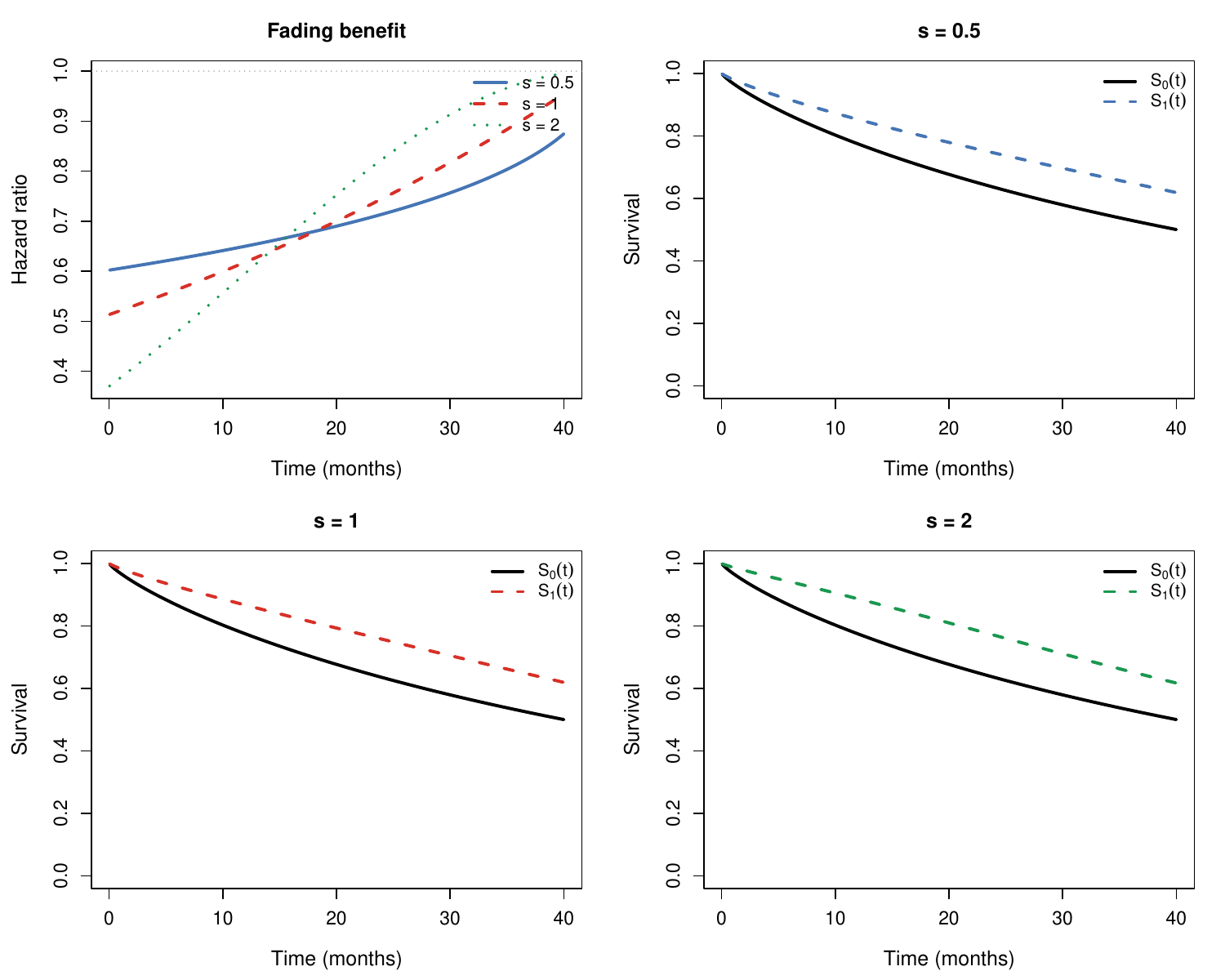}
  \caption{\textbf{Case~2: fading benefit.}
    $\alpha(u)=-c_0\,\max(t_M-u,0)^s$,  mean log-HR $=\log(0.70)$.
    The HR is below 1 early and converges to 1 by $t_M$; larger $s$
    concentrates the benefit closer to $u=0$.}
  \label{fig:dgp-case2}
\end{figure}

\begin{figure}[ht]
  \centering
  \includegraphics[width=0.95\linewidth]{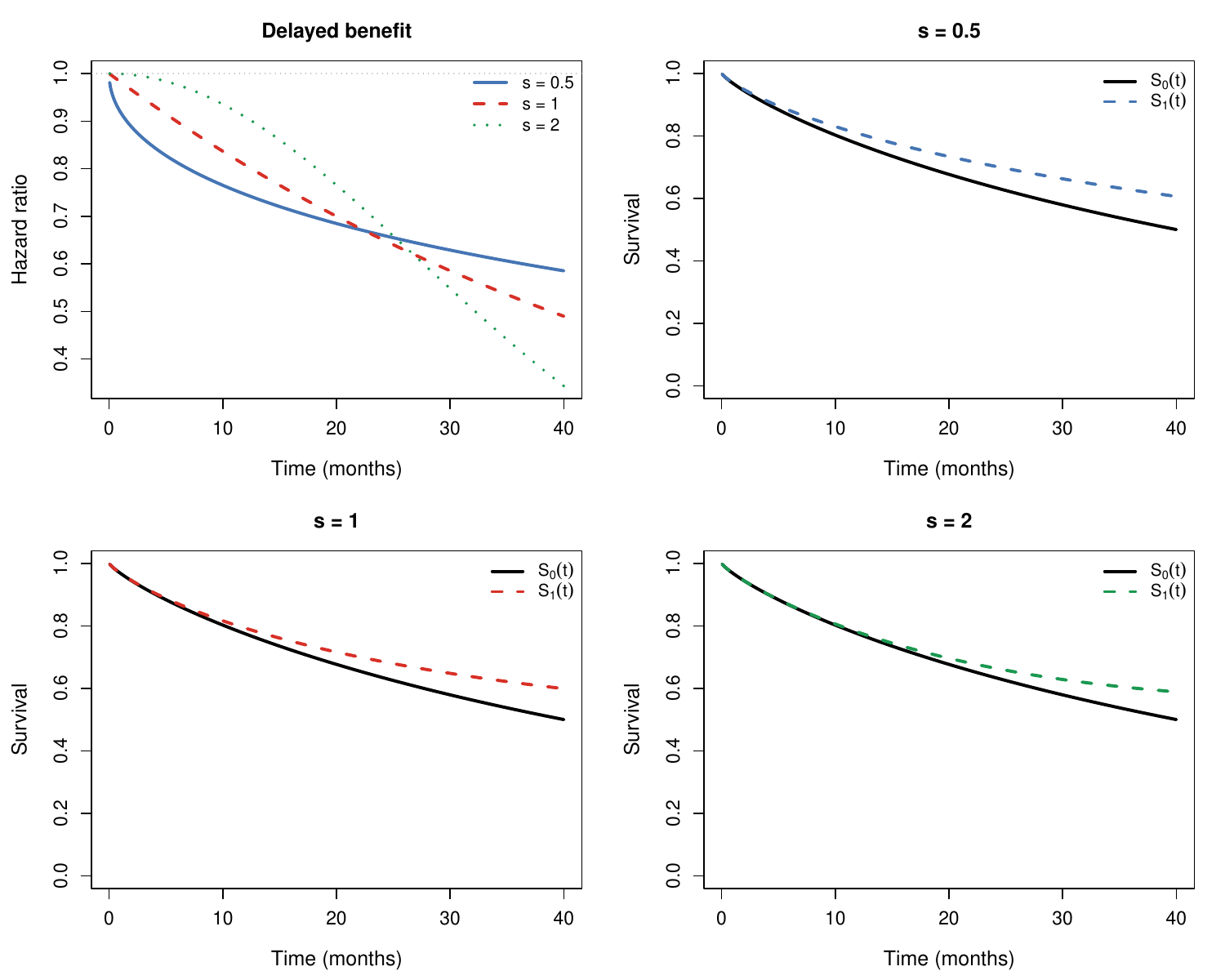}
  \caption{\textbf{Case~3: delayed benefit.}
    $\alpha(u)=-c_0\,u^s$, mean log-HR $=\log(0.70)$.
    The HR starts at 1 and decreases; larger $s$ delays the onset of the
    treatment effect, with the benefit concentrated at later time points for $s=2$.}
  \label{fig:dgp-case3}
\end{figure}

\subsection{PH scenario (Case 1)}
\label{sec:sim-ph}

Table~\ref{tab:main-sim-ph-hr070-s0-010-090-light-heavy} summarizes the comparison for \(S_0(40)\in\{0.10,0.30,0.50,0.70,0.90\}\). Under light censoring, the ARE of the D-RMST-based test relative to Cox HR-based test decreases gradually as \(S_0(40)\) increases. Under heavy censoring, however, this ARE remains close to one across the full range of \(S_0(40)\).

The ARE of the RAH-based test relative to the Cox HR-based test is also below one throughout. This is expected as the Cox HR-based test is optimal under the PH assumption. The efficiency declines as \(S_0(40)\) increases and more strongly under heavy censoring. As \(S_0(\tau)\) increases, the lower bound
$W_{\RAH}(\tau)=S_0(\tau)$ becomes larger, so that $
h_{\RAH}(\tau)=2
\sqrt{S_0(\tau)/S_C(\tau)}$
also increases. This inflates \(\|h_{\RAH}\|^2\) and enlarges RAH's discrepancy from the optimal test direction. At \(S_0(40)=0.10\) under heavy censoring, ARE(RAH/Cox) is \AreRahCoxPHHeavyLowS, whereas ARE(RAH/D-RMST) is \AreRahDrmstPHHeavyLowS. The ordering
$$
\mathrm{Cox}\ge \RAH \approx \DRMST
$$
under PH with light censoring and small \(S_0(\tau)\) is also evident in Figure~\ref{fig:cosine-geometry} (top row): the signal direction is aligned with the Cox HR-based test direction, whereas the RAH and D-RMST test directions are similarly off-axis.

The ARE(RAH/D-RMST) is close to one under light censoring but substantially lower under heavy censoring, indicating a more pronounced performance separation between the two procedures.

Table~\ref{tab:main-sim-ph-hr070-kappa-050-200-light-heavy} reports results for \(S_0(40)=0.50\) and \(\kappa\in\{0.5,0.826,1,1.5,2\}\). Under light censoring, the ARE(RAH/Cox) decreases from \AreRahCoxPHShapeHalf\ when \(\kappa=0.5\), corresponding to a decreasing baseline hazard, to \AreRahCoxPHShapeTwo\ when \(\kappa=2\), corresponding to an increasing baseline hazard. Intuitively, an increasing hazard shifts more events toward later follow-up, where RAH weighting differs more substantially from optimal weighting of Cox HR-based test, thereby reducing its efficiency. Both Tables~\ref{tab:main-sim-ph-hr070-s0-010-090-light-heavy} and~\ref{tab:main-sim-ph-hr070-kappa-050-200-light-heavy} show systematically larger efficiency losses for the RAH-based test in comparison with the optimal Cox HR-based test under heavy censoring than under light censoring. This pattern persists across alternative baseline survival distributions and trial horizons. %

In summary, these results show that the ARE of the competing tests depends materially on the censoring distribution. The censoring and accrual patterns induced by the study design should not be viewed merely as nuisance features; rather, they are important design inputs for choosing appropriate testing procedures.

\begin{table}[ht]
\centering
\caption{ARE and ERE under PH (HR$=0.70$), $\kappa=0.826$, $\tau=40$, ${\tau_{\rm Cox}=40},$ $n=400$ per arm, $B=2000$.}
\label{tab:main-sim-ph-hr070-s0-010-090-light-heavy}
\resizebox{\textwidth}{!}{%
\begin{tabular}{lccccccc}
\toprule
Censoring & $S_0(40)$ & ARE(D/C) & ARE(R/C) & ARE(R/D) & ERE(D/C) & ERE(R/C) & ERE(R/D)
\\
\midrule
Light & 0.10 & 0.994 & 0.941 & 0.947 & 1.006 & 0.946 & 0.940 \\
Light & 0.30 & 0.967 & 0.898 & 0.929 & 0.978 & 0.860 & 0.879 \\
Light & 0.50 & 0.942 & 0.871 & 0.924 & 0.948 & 0.893 & 0.942 \\
Light & 0.70 & 0.920 & 0.851 & 0.924 & 0.912 & 0.866 & 0.950 \\
Light & 0.90 & 0.901 & 0.835 & 0.927 & 0.919 & 0.832 & 0.905 \\
\midrule
Heavy & 0.10 & 0.982 & 0.798 & 0.812 & 0.980 & 0.725 & 0.739 \\
Heavy & 0.30 & 0.997 & 0.720 & 0.722 & 1.019 & 0.691 & 0.678 \\
Heavy & 0.50 & 0.998 & 0.680 & 0.682 & 1.024 & 0.671 & 0.656 \\
Heavy & 0.70 & 0.995 & 0.654 & 0.658 & 1.011 & 0.677 & 0.670 \\
Heavy & 0.90 & 0.990 & 0.635 & 0.642 & 1.019 & 0.658 & 0.646 \\
\bottomrule
\end{tabular}
}
\end{table}

\begin{table}[ht]
\centering
\caption{ARE and ERE under PH (HR$=0.70$), $S_0(40)=0.50$, $\tau=40$, ${\tau_{\rm Cox}=40},$ $n=400$ per arm, $B=2000$.}
\label{tab:main-sim-ph-hr070-kappa-050-200-light-heavy}
\resizebox{\textwidth}{!}{%
\begin{tabular}{lccccccc}
\toprule
Censoring & $\kappa$ & ARE(D/C) & ARE(R/C) & ARE(R/D) & ERE(D/C) & ERE(R/C) & ERE(R/D)
\\
\midrule
Light & 0.500 & 0.962 & 0.922 & 0.959 & 0.970 & 0.914 & 0.942 \\
Light & 0.826 & 0.942 & 0.871 & 0.924 & 0.948 & 0.893 & 0.942 \\
Light & 1.000 & 0.934 & 0.844 & 0.904 & 0.953 & 0.838 & 0.879 \\
Light & 1.500 & 0.919 & 0.781 & 0.850 & 0.936 & 0.779 & 0.832 \\
Light & 2.000 & 0.914 & 0.737 & 0.806 & 0.917 & 0.738 & 0.805 \\
\midrule
Heavy & 0.500 & 0.996 & 0.750 & 0.753 & 1.016 & 0.737 & 0.725 \\
Heavy & 0.826 & 0.998 & 0.680 & 0.682 & 1.024 & 0.671 & 0.656 \\
Heavy & 1.000 & 0.998 & 0.653 & 0.654 & 1.025 & 0.643 & 0.628 \\
Heavy & 1.500 & 0.994 & 0.599 & 0.603 & 1.015 & 0.577 & 0.569 \\
Heavy & 2.000 & 0.991 & 0.570 & 0.576 & 1.029 & 0.536 & 0.521 \\
\bottomrule
\end{tabular}
}
\end{table}

\subsection{Non-PH: fading benefit (Case 2)}
\label{sec:sim-fading}

Table~\ref{tab:main-sim-nonph} presents the two monotone non-PH scenarios with $\kappa=0.826$, $S_0(40)=0.50$, and $s=1$. Under Case~2, the D-RMST-based test outperforms the RAH-based test, with the Cox HR-based test performing intermediately, i.e.,
$$ \DRMST\ge \rm{Cox}\ge \RAH.$$
Because the treatment effect is concentrated in the early follow-up period, the down-weighting of late differences by D-RMST causes little efficiency loss. In contrast, the RAH-based test continues to assign positive weight to late follow-up times, where the treatment effect has largely disappeared. This leads to the expected ordering of three tests. The corresponding geometric interpretation is shown in the middle row of Figure~\ref{fig:cosine-geometry}.

\subsection{Non-PH: delayed benefit (Case 3)}
\label{sec:sim-delayed}

Under Case~3, the ARE of the RAH-based test relative to the Cox HR-based test exceeds 1 under both censoring regimes (Table~\ref{tab:main-sim-nonph}). The treatment effect is concentrated at late time, whereas the weight for D-RMST approaches zero near $\tau$, effectively down-weighting the events in the most informative time window. The Cox model also assigns decreasing weight to later follow-up times, as its weight function $w(t)$ declines, but this down-weighting is less pronounced than that of D-RMST. In contrast, the RAH-based test maintains a bigger positive weight near $\tau$, as $$W_{\RAH}(\tau)=S_0(\tau)>W_{\rm Cox}(\tau)=S_0(\tau)S_C(\tau)>W_{\DRMST}(\tau)=0.$$ Therefore, it preserves a greater alignment with the delayed treatment effect, and yields the expected ordering,
$$\RAH \ge \rm{Cox} \ge  \DRMST.$$
See Figure~\ref{fig:cosine-geometry} (bottom row) for the corresponding geometric interpretation.

Across all reported simulations, ARE and ERE generally favor the same test. Disagreements occur only when both measures are close to one, where Monte Carlo variation and finite-alternative effects can reverse the efficiency ordering. Even in these cases, ARE and ERE remain quantitatively similar.

\begin{table}[ht]
\centering
\caption{ARE and ERE for monotone non-PH patterns; $\tau=40$, {$\tau_{\rm Cox}=40,$} $S_0(40)=0.50$, $n=400$ per arm, $B=2000$.}
\label{tab:main-sim-nonph}
\resizebox{\textwidth}{!}{%
\begin{tabular}{lccccccc}
\toprule
Pattern & Cens. & ARE(D/C) & ARE(R/C) & ARE(R/D) & ERE(D/C) & ERE(R/C) & ERE(R/D)
\\
\midrule
fading & Light & 1.108 & 0.741 & 0.669 & 1.117 & 0.707 & 0.633 \\
fading & Heavy & 1.011 & 0.499 & 0.494 & 1.039 & 0.472 & 0.454 \\
delayed & Light & 0.626 & 1.190 & 1.901 & 0.638 & 1.179 & 1.850 \\
delayed & Heavy & 0.962 & 1.349 & 1.402 & 0.973 & 1.375 & 1.413 \\
\bottomrule
\end{tabular}
}
\end{table}

\section{CheckMate 057: Analysis and Sensitivity}
\label{sec:cm057-analysis}

In this section, we illustrate the practical performance of three two-sample tests using data from the CheckMate 057 trial
\citep{Borghaei2015}. %
To this end, we fit Weibull model to the trial data, compute the
theoretical ARE for each test, and conduct simulations to compute the empirical power and required sample sizes under the fitted parametric model.

\subsection{Trial background and survival model}
\label{sec:cm057-background}

CheckMate 057 was a randomized, open-label, Phase III trial comparing
nivolumab with docetaxel in 582 patients with previously treated
non-squamous non-small-cell lung cancer; the primary endpoint was
overall survival \citep{Borghaei2015}.  Because individual-level data
are not publicly available, we reconstructed them from the published
KM curves and the reported risk-set information using the
algorithm of \citet{Guyot2012}, yielding $n_0=290$ docetaxel and
$n_1=292$ nivolumab patients with 419 observed deaths (226 and 193,
respectively).

Throughout this section, all three tests use the same analysis horizon
$\tau=24$ months, corresponding to a landmark time of 2 years, which is within the observed follow-up of both groups.  The D-RMST and RAH test statistics are defined on $(0,\tau]$ by
construction; for the Cox HR-based test, follow-up is administratively
censored at $\tau$, so that all three tests examine the treatment effect on the same time window.

The Schoenfeld global test yields $p = \SchoenfeldPCM$, providing strong evidence against the PH assumption. We therefore fit separate Weibull models to each arm and obtain:
\begin{align*}
  \text{Docetaxel (arm 0):} &\quad \kappa_0=\WeibullShapeControl,\; \eta_0=\WeibullScaleControl\text{ months}, \\
  \text{Nivolumab (arm 1):} &\quad \kappa_1=\WeibullShapeTreatment,\; \eta_1=\WeibullScaleTreatment\text{ months}.
\end{align*}
Figure~\ref{fig:weibull} overlays the nonparametric KM curves with the fitted Weibull curves, demonstrating their close agreement.

\begin{figure}[ht]
  \centering
  \includegraphics[width=0.95\textwidth]{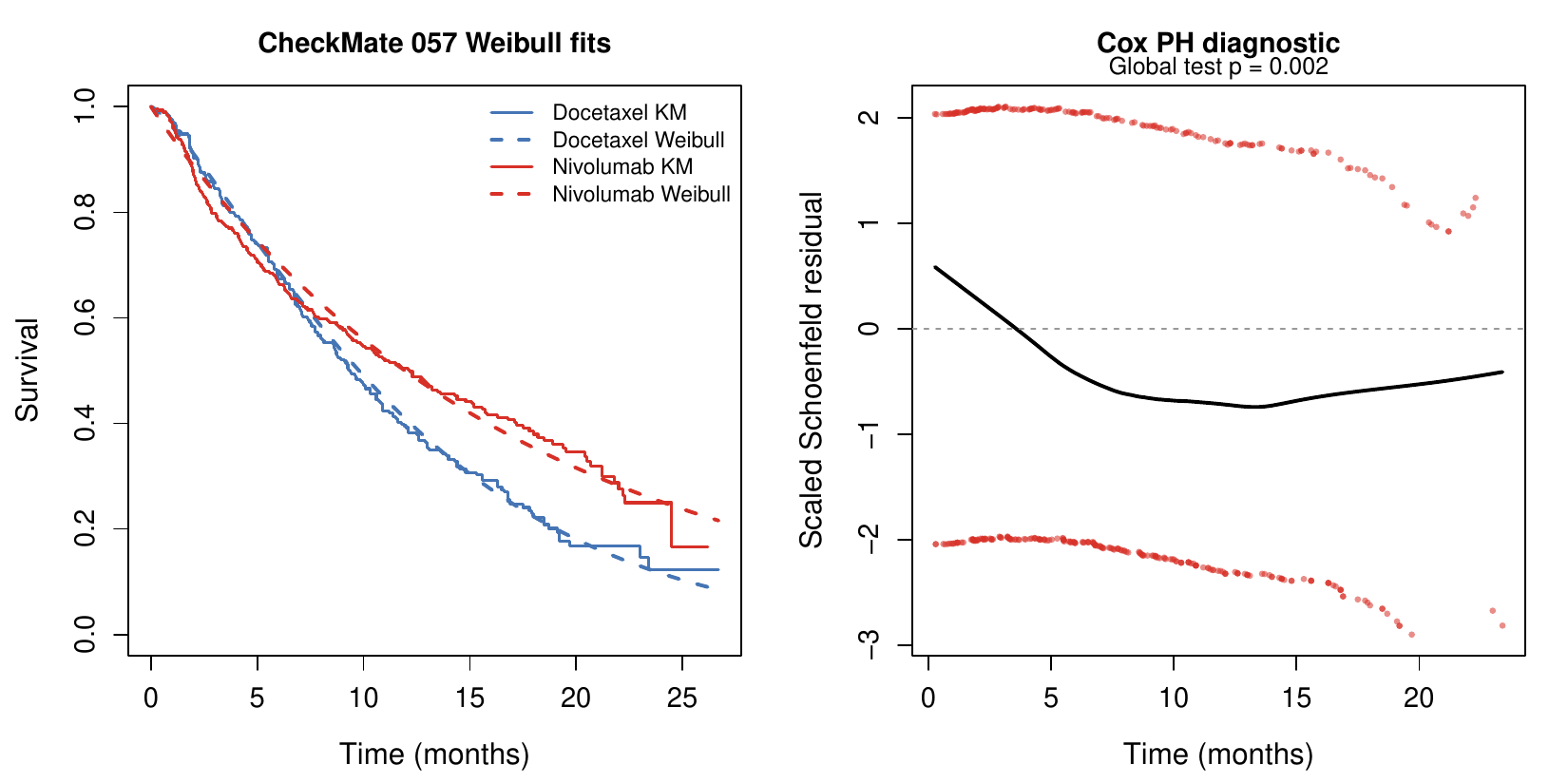}
  \caption{Survival model fit (left) and proportional-hazards diagnostic (right)
    for the CheckMate 057 trial.}
  \label{fig:weibull}
  \label{fig:ph-diag}
  \label{fig:cm057-combined}
\end{figure}

In the docetaxel arm, $\kappa_0>1$ induces a monotone increasing hazard. In the nivolumab arm, $\kappa_1\approx1$ indicates an approximately constant hazard. The fitted HR is time varying: $\exp\{\alpha(1)\}=1.206$, it crosses one near month two, and it reaches 0.500 at month 24. This pattern is characteristic of immunotherapy and is consistent with the Schoenfeld diagnosis (Figure~\ref{fig:ph-diag}), whose negative residual slope indicates a decreasing log HR over time.

For the design-stage evaluations in terms of ARE, ERE, power, and required sample size, we pair the fitted Weibull event-time model with a working model for censoring informed by the trial design. Patients were enrolled between November 2012 and December
2013, and overall survival was assessed at a database lock in March 2015,
with a reported minimum follow-up of 13.2 months \citep{Borghaei2015}.
Assuming approximately uniform accrual, we use the rounded administrative
follow-up window $U\sim\mathrm{Uniform}(\CMAdminLower,\CMAdminUpper)$ months.
We add an independent exponential dropout time
$E\sim\mathrm{Exp}(\lambda_c)$ with $\lambda_c=\CMDropoutRate$ per month
(about $\CMDropoutAnnualPct\%$ per year), and set $C=\min(U,E)$.
The administrative window is consistent with the predominantly late
censoring in the reconstructed data; the dropout rate is treated as a design assumption and examined in the sensitivity analysis below.

Table~\ref{tab:cm057-are-ere} presents the ARE and ERE results. Both measures yield the same ordering of the three tests:
\[
\RAH \ge \mathrm{Cox} \ge \DRMST.
\]
Figure~\ref{fig:cosine-cm057} presents the compass diagram for the CheckMate 057 data. The corresponding squared cosine similarities are
\[
\cos^2(\theta_{\mathrm{Cox}})=\CosSqCoxCM,\qquad
\cos^2(\theta_{\DRMST})=\CosSqDrmstCM,\qquad
\cos^2(\theta_{\RAH})=\CosSqRahCM.
\]
Because the ARE between two tests is given by the ratio of their squared cosine similarities, these values imply
\[
\mathrm{ARE}(\RAH/\mathrm{Cox})=\AreRahCoxCM
\quad\text{and}\quad
\mathrm{ARE}(\RAH/\DRMST)=\AreRahDrmstCM.
\]

Figure~\ref{fig:cosine-cm057} also provides a geometric explanation for why this alternative favors the RAH-based test. Although the fitted HR crosses one, it does so within the first two months, so the treatment benefit occupies essentially the entire analysis window and is strongest late in follow-up. The signal direction places substantial weight on the late follow-up period, where the treatment benefit of nivolumab is greatest. In contrast, the Cox HR-based test direction places more weight on earlier times, which gradually decays over follow-up. The D-RMST test direction tapers more rapidly to zero as time approaches $\tau$, thereby placing even less weight on the late follow-up period. The RAH test weight, by comparison, is bounded below by $S_0(\tau)\approx\SzeroCMTau$, allowing it to retain meaningful weight at later times. Consequently, the RAH test direction is more closely aligned with the signal direction, resulting in greater efficiency under this late-emerging benefit.

\begin{table}[ht]
\centering
\caption{ARE and ERE under parametric model fitted to the CheckMate 057 trial; all methods use $\tau=24$ months, $n=300$ per arm, $B=500$.}
\label{tab:cm057-are-ere}
\begin{tabular}{lcc}
\toprule
Comparison & ARE & ERE
\\
\midrule
D-RMST/Cox & 0.778 & 0.776 \\
RAH/Cox & 1.067 & 1.167 \\
RAH/D-RMST & 1.372 & 1.504 \\
\bottomrule
\end{tabular}
\end{table}

\begin{figure}[ht]
  \centering
  \includegraphics[width=0.90\linewidth]{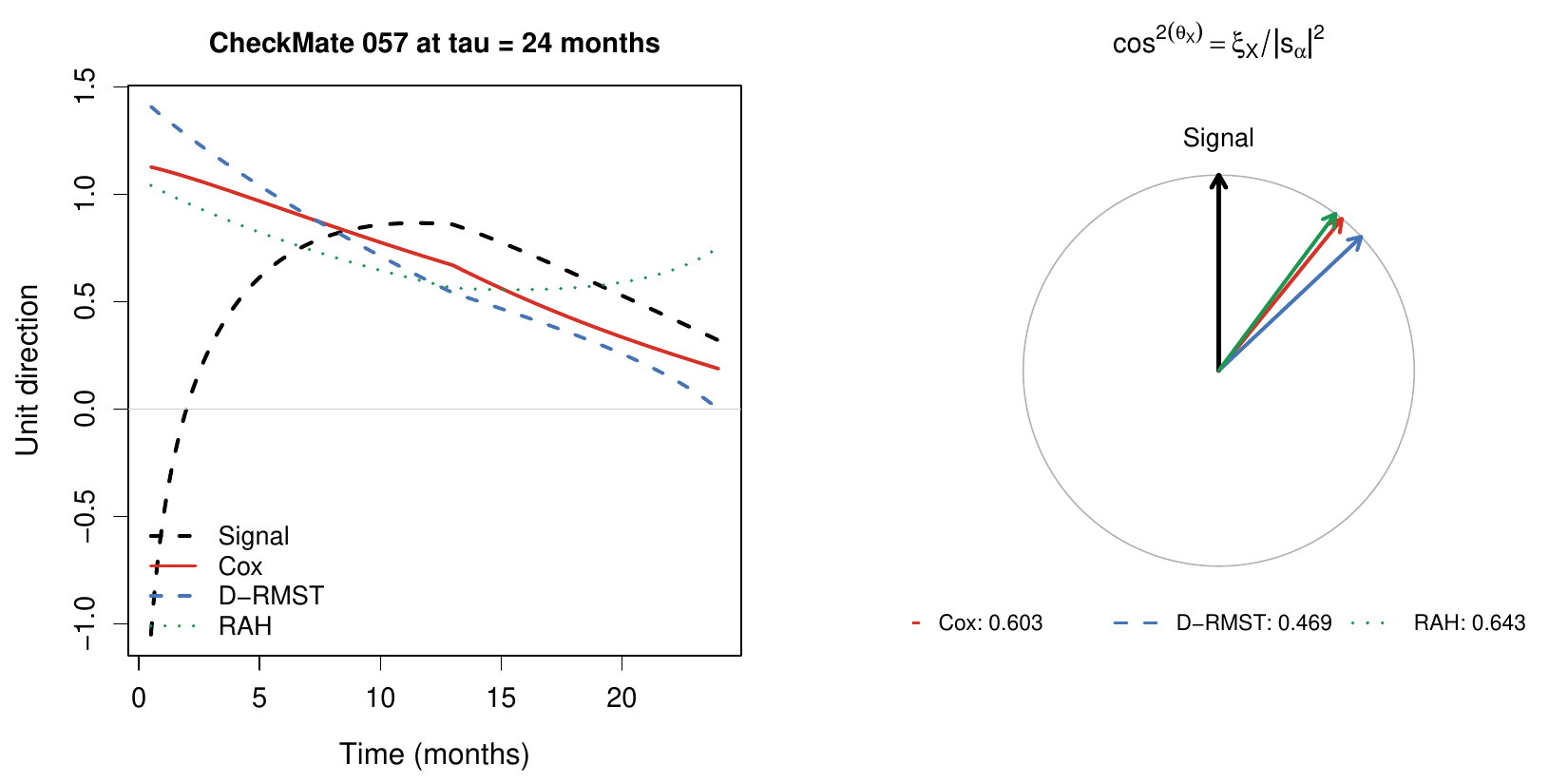}
  \caption{Unit-vector functions (left) and compass diagram (right)
  for Weibull distributions fitted to CheckMate 057 trial.
  All three tests use $\tau=\TauCM$ months; the function panel displays them on $[0,\tau]$.
  The signal direction is placed at 12 o'clock and reflects the
  fitted non-PH pattern, in which the nivolumab benefit emerges late in follow-up.
  Each test direction is annotated with
  $\cos^2(\theta_X)$.
  RAH is closest to the signal direction, followed by Cox and D-RMST.}
  \label{fig:cosine-cm057}
\end{figure}

Beyond the efficiency metrics themselves, a key practical question is how differences in ARE translate into differences in power and required sample size. Figure~\ref{fig:power} shows the empirical power as a function of sample size per arm, with the RAH-based test consistently achieving the highest power, followed by the Cox HR-based test and then the D-RMST-based test. Table~\ref{tab:cm057-power-n80} summarizes power at $n_0=n_1=300$ per arm and the smallest sample size on the prespecified 50-patient incremental grid that reaches 80\% power based on simulation. RAH reaches the target power at \NPowerRahCM\ patients per arm, compared with \NPowerCoxCM\ for Cox and \NPowerDrmstCM\ for D-RMST. Thus, Cox and D-RMST require \NPowerCoxMorePct\% and \NPowerDrmstMorePct\% more patients than RAH, respectively. At 300 patients per arm, the empirical power is \PowerRahCMPct\% for RAH-based test, \PowerCoxCMPct\% for Cox HR-based test, and \PowerDrmstCMPct\% for D-RMST-based test.

\begin{table}[ht]
\centering
\caption{Empirical power at $n=300$ per arm and the smallest sample size on  50-patient incremental grid that reaches 80\% power based on simulation ( $B=2000$). The ratio of required sample size for 80\% power based on simulation, ARE and ERE.}
\label{tab:cm057-power-n80}
\begin{tabular}{lccccc}
\toprule
Method & Power & $n_{80}$ & Simulation & ARE & ERE
\\
\midrule
Cox & 0.849 & 300 & 1.20 & 1.07 & 1.17 \\
D-RMST & 0.725 & 400 & 1.60 & 1.37 & 1.50 \\
RAH & 0.889 & 250 & 1.00 & 1.00 & 1.00 \\
\bottomrule
\end{tabular}
\end{table}

\begin{figure}[ht]
  \centering
  \includegraphics[width=0.65\textwidth]{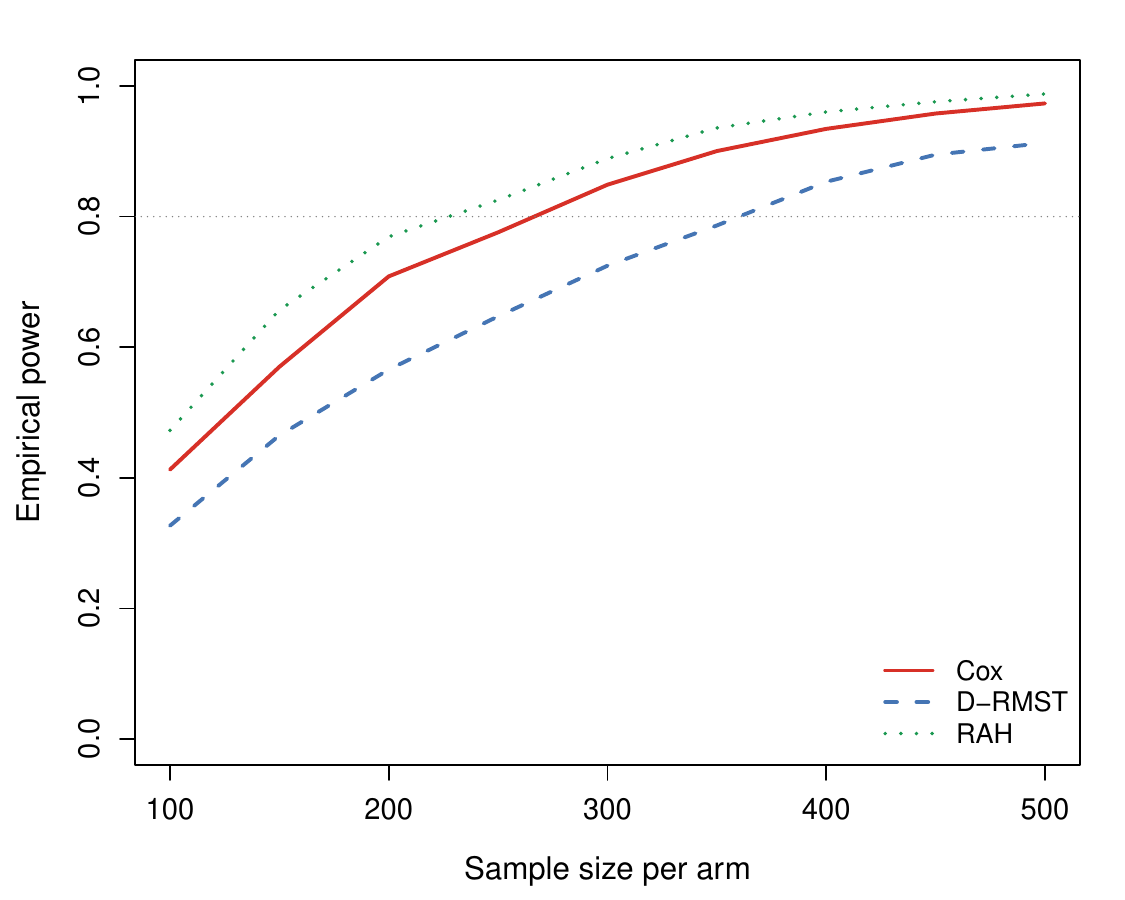}
  \caption{Empirical power for Cox (red, solid), D-RMST (blue, dashed), and RAH (green, dotted) under the fitted CheckMate 057 non-PH alternative. Horizontal line: 80\% power.}
  \label{fig:power}
\end{figure}

We now assess the robustness of the ordering $\RAH \ge \rm{Cox} \ge \DRMST$
to the analysis horizon~$\tau$, dropout rate~$\lambda_c$, and assumed alternative. The results are reported in  Table~\ref{tab:sens-are}.
\begin{enumerate}
\item {\bf Horizon $\tau$:}
ARE(D-RMST/Cox) increases substantially with $\tau$, from \AreDrmstCMTauEighteen\ at $\tau=18$ to \AreDrmstCMTauTwentyfour\ at $\tau=24$. The relatively low efficiency of D-RMST at shorter horizons reflects truncation of the late treatment benefit. In contrast, ARE(RAH/Cox) is comparatively stable (\AreRahCMTauEighteen, \AreRahCMTauTwentyone and \AreRahCMTauTwentyfour\ at 18, 21 and 24 months, respectively), while ARE(RAH/D-RMST) decreases from \AreRahDrmstCMTauEighteen\ to \AreRahDrmstCMTauTwentyfour over the same time window.

\item {\bf Dropout rate:}
Increasing $\lambda_c$ from 0.001 to 0.020 shifts ARE(D-RMST/Cox) from \AreDrmstCMDropoutLow\ to \AreDrmstCMDropoutHigh, indicating that heavier dropout reduces the relative efficiency of Cox HR-based test. ARE(RAH/Cox) increases modestly from \AreRahCMDropoutLow\ to \AreRahCMDropoutHigh, whereas ARE(RAH/D-RMST) decreases from \AreRahDrmstCMDropoutLow\ to \AreRahDrmstCMDropoutHigh. This pattern reflects the growing attenuation of the weight function for Cox HR-based test at later times as $S_C(t)$ decreases.

\item {\bf PH vs.\ non-PH:}
Under a PH alternative (HR$=0.75$, with the same Weibull shape as the event-time distribution for docetaxel arm), Cox HR-based test is the most efficient, with ARE(D-RMST/Cox)$=\AreDrmstCMPH$ and ARE(RAH/Cox)$=\AreRahCMPH$. This highlights that RAH's advantage in the assumed non-PH setting arises from its greater alignment with the late treatment benefit; the positive weight floor $W_{\RAH}(\tau)=S_0(\tau)>0$ provides no such advantage under PH.
\end{enumerate}

In summary, the ERE and ARE results are generally consistent. Under the assumed non-PH alternative, the RAH-based test is more efficient than both Cox HR- and D-RMST-based tests. Under the PH alternative, however, the RAH-based test is slightly less efficient than the Cox HR-based test.

\begin{table}[ht]
\centering
\caption{Sensitivity to analysis horizon, dropout rate, and hazard structure; the common baseline is $\tau=24$, $\lambda_c=0.005$, $n=300$ per arm and $B=500$.}
\label{tab:sens-are}
\resizebox{\textwidth}{!}{%
\begin{tabular}{lccccccc}
\toprule
Panel & Setting & ARE(D/C) & ARE(R/C) & ARE(R/D) & ERE(D/C) & ERE(R/C) & ERE(R/D)
\\
\midrule
Horizon & $\tau=18$ & 0.534 & 1.063 & 1.990 & 0.458 & 1.102 & 2.407 \\
Horizon & $\tau=21$ & 0.657 & 1.080 & 1.644 & 0.613 & 1.126 & 1.838 \\
Horizon & $\tau=24$ & 0.778 & 1.067 & 1.372 & 0.776 & 1.167 & 1.504 \\
Dropout & $\lambda_c=0.001$ & 0.749 & 1.047 & 1.398 & 0.723 & 1.100 & 1.521 \\
Dropout & $\lambda_c=0.005$ & 0.778 & 1.067 & 1.372 & 0.776 & 1.167 & 1.504 \\
Dropout & $\lambda_c=0.020$ & 0.891 & 1.136 & 1.274 & 0.876 & 1.206 & 1.377 \\
Hazard & Non-PH (fitted) & 0.778 & 1.067 & 1.372 & 0.776 & 1.167 & 1.504 \\
Hazard & PH (HR$=0.75$) & 0.988 & 0.909 & 0.919 & 0.938 & 0.926 & 0.988 \\
\bottomrule
\end{tabular}
}
\end{table}

\section{Discussion}
\label{sec:discussion}

The results provide a unified view of the relative efficiency of the Cox-, D-RMST-, and RAH-based tests. A central finding is that no single test is uniformly most efficient; rather, relative efficiency depends jointly on the pattern of treatment effect and the trial design.
The Cox HR-based test is most efficient under the PH assumption. Under non-PH alternatives, the D-RMST-based test can be more efficient when treatment effects are stronger earlier in follow-up, whereas the RAH-based test can be more efficient when treatment benefits emerge later. These patterns, however, are not determined by the treatment-effect pattern alone. Censoring and accrual affect the information available over follow-up and can alter the relative test efficiency, and in some settings reverse, the efficiency ordering among the tests.

The proposed cosine metric provides a geometric way to understand these differences by quantifying the alignment between the treatment-effect signal and the direction targeted by each test.  At the design stage, the metric can be used to evaluate candidate tests across a range of plausible assumptions about the treatment-effect pattern and other design features. Such scenario-based comparisons can inform the prespecified choice of the primary test and assess how robust that choice is to uncertainty in the design assumptions.

Several extensions warrant future investigation. The cosine metric could potentially be used not only to compare candidate tests but also as an objective function for optimizing design features such as accrual and follow-up duration.  Another extension is to incorporate the RAH weight $W_{\RAH}(\cdot),$ which is not a member of the Fleming--Harrington weight family \citep{FlemingHarrington1991}, into a Max-combo test \citep{LinEtal2020} together with $W_{\mathrm{Cox}}(\cdot)$ and $W_{\DRMST}(\cdot).$ Such a procedure could provide sensitivity to a broader range of treatment-effect patterns, although its operating characteristics would require further study.

\appendix

\section{Asymptotic Distribution of the RAH Statistic}
\label{sec:appendix-proofs}

\subsection{Local-alternative drift}
Throughout we work under \emph{Pitman contiguous alternatives}:
\begin{equation}\label{eq:local-alt}
  H_{An}:\quad
  \log\frac{\lambda_{1n}(t)}{\lambda_0(t)}=\frac{\alpha(t)}{\sqrt{n}},
\end{equation}
where $\alpha(\cdot)$ is a fixed function on $(0,\tau]$.
Under~\eqref{eq:local-alt}, to first order in $n^{-1/2}$:
$\sqrt{n}[S_{1n}(t)-S_0(t)]\to-S_0(t)\int_0^t\lambda_0(u)\alpha(u)\dd u$.

\begin{proposition}\label{prop:mean-rah}
Under $H_{An}$, $\sqrt{n}\log(\hat\theta_{\RAH})$ is asymptotically normal
with mean
\begin{equation}\label{eq:mean-RAH-compact}
  \mu_{\RAH}
    =\int_0^\tau
       \!\left[\frac{1}{F_0(\tau)}-\frac{R_0(t)}{R_0(\tau)}\right]
       \alpha(t)\lambda_0(t)\dd t.
\end{equation}
\end{proposition}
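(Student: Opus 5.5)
Under $H_{An}$ I would write $\sqrt n\log(\hat\theta_{\RAH})$ as a deterministic drift plus a centered fluctuation. The decomposition is
\[
\sqrt n\log\hat\theta_{\RAH}=\sqrt n\{\log\hat\theta_{\RAH}-\log\theta_{\RAH,n}\}+\sqrt n\log\theta_{\RAH,n},
\]
where $\theta_{\RAH,n}$ is the true RAH under $S_{1n}$. The first term is handled by the result of \citet{Uno2023} quoted in Section~\ref{sec:three-stats}, which gives asymptotic normality with variance $\sigma^2_{\RAH}$. Because $S_{1n}\to S_0$, $\Lambda_{1n}\to\Lambda_0$ and $G_1$ stays bounded away from zero, this variance converges to its null value. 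To justify applying the null-type limit along the sequence $H_{An}$, I would invoke contiguity and Le Cam's third lemma. Alternatively, one can use the martingale representation of $\sqrt n(\hat\Lambda_k-\Lambda_k)$ directly, since the predictable variation converges under the local sequence. Either route shows that the fluctuation is asymptotically $N(0,\sigma^2_{\RAH})$ with no extra shift, because it is centered at the true parameter $\theta_{\RAH,n}$.

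It then remains to compute $\lim\sqrt n\log\theta_{\RAH,n}$. Write
\[
\log\theta_{\RAH,n}=\{\log F_{1n}(\tau)-\log F_0(\tau)\}-\{\log R_{1n}(\tau)-\log R_0(\tau)\}.
\]
Let $A(t)=\int_0^t\alpha(u)\lambda_0(u)\dd u$. The stated expansion $\sqrt n[S_{1n}(t)-S_0(t)]\to -S_0(t)A(t)$ gives $\sqrt n[F_{1n}(\tau)-F_0(\tau)]\to S_0(\tau)A(\tau)$. Integrating the same expansion gives $\sqrt n[R_{1n}(\tau)-R_0(\tau)]\to-\int_0^\tau S_0(t)A(t)\dd t$. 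The delta method then yields
\[
\sqrt n\log\theta_{\RAH,n}\to \frac{S_0(\tau)A(\tau)}{F_0(\tau)}+\frac{1}{R_0(\tau)}\int_0^\tau S_0(t)A(t)\dd t .
\]
I would justify the interchange of limit and integral by dominated convergence, using the uniform convergence of the survival expansion on $[0,\tau]$, which holds for bounded $\alpha$.

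Next I convert this to the form in \eqref{eq:mean-RAH-compact}. By Fubini, $\int_0^\tau S_0(t)A(t)\dd t=\int_0^\tau\{R_0(\tau)-R_0(u)\}\alpha(u)\lambda_0(u)\dd u$, and $A(\tau)=\int_0^\tau\alpha(u)\lambda_0(u)\dd u$. Collecting terms, the drift is
\[
\int_0^\tau\Bigl[\frac{S_0(\tau)}{F_0(\tau)}+1-\frac{R_0(u)}{R_0(\tau)}\Bigr]\alpha(u)\lambda_0(u)\dd u .
\]
Since $S_0(\tau)/F_0(\tau)+1=1/F_0(\tau)$, this equals $\mu_{\RAH}$. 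It also matches the weight $W_{\RAH}$ of Section~\ref{sec:wlr}, which is a useful consistency check.

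The main obstacle is the rigor of the first step. One must show that the estimator's fluctuation, centered at $\theta_{\RAH,n}$, has the same limit under the moving alternative, which requires uniformity of the functional delta method in the underlying distribution. I would rely on the Hadamard differentiability of $\Lambda\mapsto(F(\tau),R(\tau))$ together with contiguity of $H_{An}$ to $H_0$. This lets me conclude that $\sqrt n\log\hat\theta_{\RAH}-\sqrt n\,U_{\RAH}=o_p(1)$ holds under $H_{An}$ as well. The asymptotic distribution then follows from the martingale central limit theorem applied to $U_{\RAH}$, whose compensator increment contributes exactly the drift $\alpha(t)\lambda_0(t)\dd t$ weighted by $W_{\RAH}$.
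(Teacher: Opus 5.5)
Your proposal is correct and follows essentially the same route as the paper. The paper splits $\sqrt n\log\hat\theta_{\RAH}$ into the $\log F$ and $\log R$ contrasts, writes each as a deterministic drift (from the local expansion of $S_{1n}$) plus mean-zero Gaussian fluctuations $(\eta_{1n},\eta_{2n})$, and converts $R_0(\tau)^{-1}\int_0^\tau S_0(t)A(t)\dd t$ by integration by parts where you use Fubini. Your centering at $\theta_{\RAH,n}$, together with the contiguity and Hadamard-differentiability argument, makes explicit the step the paper only asserts: that the fluctuation has no extra shift under $H_{An}$.
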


\begin{proof}
We decompose $\sqrt{n}\log(\hat\theta_{\RAH})=U_n+V_n$, where
\begin{align*}
U_n&=\sqrt{n}[\log\hat F_1(\tau)-\log\hat F_0(\tau)],\\
V_n&=-\sqrt{n}[\log\hat R_1(\tau)-\log\hat R_0(\tau)].
\end{align*}
Next, we have
\begin{align*}
  U_n &= F_0(\tau)^{-1}\sqrt{n}[\hat F_1-\hat F_0](\tau)+o_p(1)\\
      &=F_0(\tau)^{-1}\sqrt{n}[F_1-F_0](\tau)\\
      &\quad-F_0(\tau)^{-1}\sqrt{n}[(\hat S_1- S_1)-(\hat S_0-S_0)](\tau)+o_p(1)\\
      &=\int_0^\tau\frac{S_0(\tau)}{F_0(\tau)}\alpha(t)\lambda_0(t)\dd t+\eta_{1n}+o_p(1),
\end{align*}
and
\begin{align*}
V_n &= -R_0(\tau)^{-1}\int_0^\tau\sqrt{n}[\hat S_1(t)-\hat S_0(t)]\dd t +o_p(1)\\
    &= -R_0(\tau)^{-1}\int_0^\tau\sqrt{n}[S_1(t)-S_0(t)]\dd t\\
    &\quad-R_0(\tau)^{-1}\int_0^\tau\sqrt{n}\left[(\hat{S}_1-S_1)-(\hat{S}_0-S_0)\right](t)\dd t+o_p(1)\\
    &=R_0(\tau)^{-1}\int_0^\tau S_0(t)A(t)\dd t+\eta_{2n}+o_p(1),
\end{align*}
where $A(t)=\int_0^t\lambda_0(u)\alpha(u)\dd u$ and $(\eta_{1n}, \eta_{2n})'$ converging in distribution to a mean zero bivariate Gaussian distribution as $n\to \infty.$  Furthermore,
\begin{align*}
R_0(\tau)^{-1}\int_0^\tau S_0(t) A(t)\dd t
&=A(\tau)-\int_0^\tau\lambda_0(t)\alpha(t)\frac{R_0(t)}{R_0(\tau)}\dd t\\
&=\int_0^\tau\left[1-\frac{R_0(t)}{R_0(\tau)}\right]\alpha(t)\lambda_0(t)\dd t.
\end{align*}
Substituting and simplifying gives~\eqref{eq:mean-RAH-compact}.
\end{proof}

\subsection{Pitman efficacy of RAH-based test}
\label{sec:appendix-pitman}
The Pitman efficacy formula follows from $\xi_{\RAH}=\mu_{\RAH}^2/\sigma_{\RAH}^2$:
\begin{equation}\label{eq:eff-RAH}
  \xi_{\RAH}
    =\frac{\displaystyle
      \left[\int_0^\tau\!\left\{\dfrac{1}{F_0(\tau)}-\dfrac{R_0(t)}{R_0(\tau)}\right\}
            \alpha(t)\lambda_0(t)\dd t\right]^{\!2}}{
      \displaystyle
      \int_0^\tau\!\left\{\dfrac{1}{F_0(\tau)}-\dfrac{R_0(t)}{R_0(\tau)}\right\}^{\!2}
      \dfrac{\lambda_0(t)}{w(t)}\dd t}.
\end{equation}
The efficacies of Cox and D-RMST follow from \citet{Tian2018}:
\begin{equation}\label{eq:eff-Cox}
  \xi_{\mathrm{Cox}}
    =\frac{\displaystyle
       \left[\int_0^{\tau} w(t)\,\alpha(t)\lambda_0(t)\dd t\right]^2}{
      \displaystyle
       \int_0^{\tau} w(t)\,\lambda_0(t)\dd t},
\end{equation}
\begin{equation}\label{eq:eff-DRMST}
  \xi_{\DRMST}
    =\frac{\displaystyle
       \left[\int_0^{\tau}\!\left\{R_0(\tau)-R_0(t)\right\}
             \alpha(t)\lambda_0(t)\dd t\right]^2}{
      \displaystyle
       \int_0^{\tau}\!\left\{R_0(\tau)-R_0(t)\right\}^{\!2}
       \frac{\lambda_0(t)}{w(t)}\dd t}.
\end{equation}

\section{Derivation of the Weight Function Representations}
\label{sec:appendix-weights}

We prove that each of the three statistics of
Section~\ref{sec:three-stats} is asymptotically equivalent to a version of
weighted logrank statistic~\eqref{eq:wlr-generic},
$$\int_0^\tau W(t)\,d[\hat\Lambda_1(t)-\hat\Lambda_0(t)].$$
All derivations use: (i) the KM linear representation
$$\sqrt{n}(\hat S_k-S_k)=-S_k\sqrt{n}(\hat\Lambda_k-\Lambda_k)+o_p(1);$$
(ii)  the functional delta method; and
(iii) the local-alternative approximation
$$S_{1n}(t)=S_0(t)-\frac{S_0(t)A(t)}{\sqrt{n}}+o(n^{-1/2}).$$
For simplicity, we also assume equal allocation and common censoring $S_C(\cdot)$.

\begin{itemize}
\item \textbf{Cox HR-based statistic.}
  Let
\(\ell_n(\theta)\) be the log partial likelihood over \([0, \tau]\),
\(U_n(\theta)=\partial\ell_n(\theta)/\partial\theta\) its score, and
\(\mathcal I_n(\theta)=-\partial^2\ell_n(\theta)/\partial\theta^2\) the
observed information.  The score at \(\theta=0\) is exactly the log-rank
statistic, \(n^{-1}U_n(0)=U_{\mathrm{Cox}}\) in ~\eqref{eq:cox-stat}, and
\[
  n^{-1}\mathcal I_n(0)
    \;\longrightarrow\;
  \int_0^{\tau}w(t)\lambda_0(t)dt=\sigma_\theta^{-2}
\]
in probability, uniformly in a neighbourhood of \(\theta=0\).  Since
\(\hat\theta=O_p(n^{-1/2})\) under the local
alternatives~\eqref{eq:local-alt} and \(U_n(\hat\theta)=0\), a one-term
Taylor expansion of \(U_n(\theta)\) about \(\theta=0\) gives
\[
  0=n^{-1}U_n(\hat\theta)
   =U_{\mathrm{Cox}}-n^{-1}\mathcal I_n(0)\,\hat\theta+o_p(n^{-1/2})
  \Rightarrow
  \sqrt n\,\hat\theta=\sigma_\theta^{2}\sqrt n\,U_{\mathrm{Cox}}+o_p(1).
\]
Hence \(\sqrt n\,\hat\theta=\sigma_\theta^2\sqrt n\,U_{\mathrm{Cox}}+o_p(1)\), and \(Z_{\mathrm{Cox}}\) and the normalized log-rank statistic are
asymptotically equivalent.  The logrank statistic~\eqref{eq:cox-stat} is
already of the weighted-log-rank form, with random weight
\[
 \hat{w}_{\rm{Cox}}(t)=\frac{Y_1(t)Y_0(t)}{Y_1(t)+Y_0(t)},
\]
where $Y_k(t)$ is the size of the risk set of group $k$ at $t.$  By law of large numbers, $n^{-1}\hat{w}_{\rm{Cox}}(t)$ converges to \(w(t)\), which is $S_0(t)S_C(t)/4$ under equal
allocation and common censoring. Therefore, we can set
\[
  W_{\mathrm{Cox}}(t)=4w(t)=\;S_0(t)S_C(t),
\]
in (\ref{eq:wlr-generic}).

\item \textbf{D-RMST.}
By functional delta-method, we have
\begin{align*}
  \sqrt{n}\hat\theta_{\DRMST}&=-\sqrt{n}\int_0^\tau S_0(t)\,\hat{\Delta}_\Lambda(t)\dd t+O\bigl(\sqrt{n}\|\hat{\Delta}_\Lambda\|^2\bigr)\\
  &=-\sqrt{n}\int_0^\tau S_0(t)\int_0^t d\hat{\Delta}_\Lambda(u)\dd t+O_p\bigl(n^{-1/2}\bigr)\\
  &=-\sqrt{n}\int_0^\tau\left\{\int_u^\tau S_0(t)\dd t\right\}d\hat{\Delta}_\Lambda(u)+O_p\bigl(n^{-1/2}\bigr)\\
  &=-\sqrt{n}\int_0^\tau\{R_0(\tau)-R_0(u)\}\,d\hat{\Delta}_\Lambda(u)+o_p(1).
\end{align*}
Therefore, we may set the weight as
\[
  W_{\DRMST}(t)=\frac{R_0(\tau)-R_0(t)}{R_0(\tau)},
\]
as $Z_{\DRMST}$ is invariant to multiplying any nonzero constant to the weight function.

\item \textbf{RAH.}
Using the same argument together with the delta method for $\log F_k(\tau)$ and $\log R_k(\tau)$,
\[
\log\hat\theta_{\RAH}=\int_0^\tau\left[\frac{S_0(\tau)}{F_0(\tau)}+\frac{R_0(\tau)-R_0(t)}{R_0(\tau)}\right]d\hat\Delta_\Lambda(t)+O_p(n^{-1}).
\]
Multiplying by $F_0(\tau)>0$ gives the equivalent weight
\[
  W_{\RAH}(t)
    =S_0(\tau)+F_0(\tau)\,\frac{R_0(\tau)-R_0(t)}{R_0(\tau)}
    =S_0(\tau)+\bigl[1-S_0(\tau)\bigr]\, W_{\DRMST}(t).
\]
\end{itemize}

\section{Simulation Study for PH Setting at HR = 0.82}
\label{sec:appendix-hr082}

We repeat the simulation for PH scenarios with HR=0.82. The results are reported in Tables~\ref{tab:sim3a-hr082} and~\ref{tab:sim3c-hr082}.
Both ARE and ERE are similar to those for HR$=0.70$:

\begin{table}[ht]
\centering
\caption{ARE and ERE under PH (HR$=0.82$), $\kappa=0.826$, $\tau=40$, $n_1=n_0=400$ per arm and $B=2000$.}
\label{tab:sim3a-hr082}
\resizebox{\textwidth}{!}{%
\begin{tabular}{lccccccc}
\toprule
Censoring & $S_0(40)$ & ARE(D/C) & ARE(R/C) & ARE(R/D) & ERE(D/C) & ERE(R/C) & ERE(R/D)
\\
\midrule
Light & 0.10 & 0.994 & 0.941 & 0.947 & 1.001 & 0.938 & 0.937 \\
Light & 0.20 & 0.980 & 0.917 & 0.935 & 0.987 & 0.898 & 0.910 \\
Light & 0.30 & 0.967 & 0.898 & 0.929 & 0.968 & 0.911 & 0.942 \\
Light & 0.40 & 0.954 & 0.883 & 0.926 & 0.957 & 0.872 & 0.911 \\
Light & 0.50 & 0.942 & 0.871 & 0.924 & 0.929 & 0.885 & 0.953 \\
Light & 0.60 & 0.931 & 0.860 & 0.924 & 0.958 & 0.852 & 0.889 \\
Light & 0.70 & 0.920 & 0.851 & 0.924 & 0.930 & 0.839 & 0.903 \\
Light & 0.80 & 0.911 & 0.842 & 0.925 & 0.952 & 0.831 & 0.872 \\
Light & 0.90 & 0.901 & 0.835 & 0.927 & 0.913 & 0.834 & 0.914 \\
\midrule
Heavy & 0.10 & 0.982 & 0.798 & 0.812 & 0.990 & 0.731 & 0.738 \\
Heavy & 0.20 & 0.993 & 0.751 & 0.756 & 1.004 & 0.725 & 0.722 \\
Heavy & 0.30 & 0.997 & 0.720 & 0.722 & 1.004 & 0.704 & 0.701 \\
Heavy & 0.40 & 0.998 & 0.698 & 0.699 & 1.004 & 0.689 & 0.686 \\
Heavy & 0.50 & 0.998 & 0.680 & 0.682 & 1.007 & 0.672 & 0.667 \\
Heavy & 0.60 & 0.997 & 0.666 & 0.668 & 1.001 & 0.676 & 0.675 \\
Heavy & 0.70 & 0.995 & 0.654 & 0.658 & 0.996 & 0.685 & 0.688 \\
Heavy & 0.80 & 0.992 & 0.644 & 0.649 & 0.993 & 0.661 & 0.665 \\
Heavy & 0.90 & 0.990 & 0.635 & 0.642 & 1.054 & 0.698 & 0.662 \\
\bottomrule
\end{tabular}
}
\end{table}

\begin{table}[ht]
\centering
\caption{ARE, ERE and simulated power under PH (HR$=0.82$), $S_0(40)=0.50$, $\tau=40$, $n_1=n_0=400$ per arm and $B=2000$.}
\label{tab:sim3c-hr082}
\resizebox{\textwidth}{!}{%
\begin{tabular}{lcccccccccc}
\toprule
Censoring & $\kappa$ & ARE(D/C) & ARE(R/C) & ARE(R/D) & Power C & Power D & Power R & ERE(D/C) & ERE(R/C) & ERE(R/D)
\\
\midrule
Light & 0.500 & 0.962 & 0.922 & 0.959 & 0.459 & 0.452 & 0.433 & 0.980 & 0.918 & 0.937 \\
Light & 0.826 & 0.942 & 0.871 & 0.924 & 0.438 & 0.418 & 0.395 & 0.929 & 0.885 & 0.953 \\
Light & 1.000 & 0.934 & 0.844 & 0.904 & 0.450 & 0.406 & 0.393 & 0.941 & 0.838 & 0.891 \\
Light & 1.500 & 0.919 & 0.781 & 0.850 & 0.416 & 0.382 & 0.341 & 0.931 & 0.802 & 0.862 \\
Light & 2.000 & 0.914 & 0.737 & 0.806 & 0.385 & 0.349 & 0.307 & 0.906 & 0.765 & 0.844 \\
\midrule
Heavy & 0.500 & 0.996 & 0.750 & 0.753 & 0.377 & 0.379 & 0.307 & 1.003 & 0.744 & 0.741 \\
Heavy & 0.826 & 0.998 & 0.680 & 0.682 & 0.338 & 0.337 & 0.260 & 1.007 & 0.672 & 0.667 \\
Heavy & 1.000 & 0.998 & 0.653 & 0.654 & 0.299 & 0.301 & 0.237 & 0.996 & 0.632 & 0.635 \\
Heavy & 1.500 & 0.994 & 0.599 & 0.603 & 0.280 & 0.284 & 0.224 & 0.991 & 0.632 & 0.638 \\
Heavy & 2.000 & 0.991 & 0.570 & 0.576 & 0.251 & 0.253 & 0.183 & 0.995 & 0.584 & 0.587 \\
\bottomrule
\end{tabular}
}
\end{table}

\section{PH Simulations with Alternative Designs}
\label{sec:appendix-tau10}

We repeat the simulation study for the PH setting under different design:
$\tau=10$, $n_0=n_1=150$, and three baseline distributions (
an exponential with $S_0(10)\approx0.37$;
a Weibull baseline with a
decreasing hazard. ($\kappa=0.7$) and $S_0(10)=0.5;$ and a Weibull baseline with an increasing hazard ($\kappa=1.5$) and $S_0(10)=0.5.$)
We also consider light censoring with $C \sim \min \left\{U(8, 12), \mbox{Exp}(0.02)\right\}$ and heavy censoring with $C\sim \min \left\{U(8, 12), \mbox{Exp}(0.20)\right\}$.

The results are reported in Table~\ref{tab:sim-tau10}. RAH-based test is more efficient than D-RMST-based test under light censoring, whereas D-RMST-based test is substantially more efficient than RAH-based test under heavy censoring.

\begin{table}[ht]
\centering
\caption{ARE and ERE under PH (HR$=0.70$), $\tau=10$, $n_1=n_0=150$ per arm and $B=500$.}
\label{tab:sim-tau10}
\resizebox{\textwidth}{!}{%
\begin{tabular}{lccccccc}
\toprule
Censoring & Baseline & ARE(D/C) & ARE(R/C) & ARE(R/D) & ERE(D/C) & ERE(R/C) & ERE(R/D)
\\
\midrule
Light & Weibull(kappa=0.7) & 0.920 & 0.981 & 1.067 & 0.985 & 0.974 & 0.989 \\
Light & Exponential(lambda0=0.10) & 0.913 & 0.976 & 1.068 & 0.941 & 0.989 & 1.051 \\
Light & Weibull(kappa=1.5) & 0.853 & 0.954 & 1.119 & 0.895 & 0.906 & 1.013 \\
\midrule
Heavy & Weibull(kappa=0.7) & 0.970 & 0.669 & 0.690 & 1.018 & 0.722 & 0.709 \\
Heavy & Exponential(lambda0=0.10) & 0.973 & 0.665 & 0.684 & 1.014 & 0.670 & 0.660 \\
Heavy & Weibull(kappa=1.5) & 0.987 & 0.651 & 0.659 & 0.987 & 0.611 & 0.619 \\
\bottomrule
\end{tabular}
}
\end{table}

\section{Simulation Study with Crossing Hazards}
\label{sec:appendix-crossing}

In this simulation, $\alpha(t)=\log(0.46)+(t/\tau_{\max})$ with $\tau_{\max}=43$ months,
$\kappa=0.826$, and $S_0(40)=0.50$: the HR starts at $0.46,$ crosses one, and rises above one late in follow-up. Thus, an early treatment benefit is reversed to harm at late times on the hazard scale. %
Table~\ref{tab:sim-crossing} reports ARE and ERE under light and heavy censoring. D-RMST-based test is more efficient than Cox HR-based test, whereas RAH-based test is less efficient than both Cox HR- and D-RMST-based tests, under both censoring regimes. The difference in efficiency is larger under heavy censoring.

This setting contrasts sharply with the alternatives from CheckMate~057 in Section~\ref{sec:cm057-analysis}.  There the HR also crosses one, but from above one and at month 2, so the treatment benefit dominates the entire analysis period and the RAH-based test is the most efficient. Here the crossing occurs near month 33 of the 40-month time window, and the higher hazard in the treatment group at later time points penalizes the tests that retain more weight near $\tau$.  The two cases illustrate that ``crossing hazards'' does not by itself determine the efficiency ordering; what matters is where the signal mass lies within the time window.

\begin{table}[ht]
\centering
\caption{ARE and ERE under crossing hazards; $\tau=40$, $S_0(40)=0.50$, $n_1=n_0=400$ per arm and $B=2000$.}
\label{tab:sim-crossing}
\begin{tabular}{lcccccc}
\toprule
Cens. & ARE(D/C) & ARE(R/C) & ARE(R/D) & ERE(D/C) & ERE(R/C) & ERE(R/D)
\\
\midrule
Light & 1.193 & 0.683 & 0.573 & 1.273 & 0.597 & 0.469 \\
\midrule
Heavy & 1.016 & 0.429 & 0.422 & 1.060 & 0.374 & 0.353 \\
\bottomrule
\end{tabular}
\end{table}

\bibliographystyle{plainnat}
\bibliography{refs}
\end{document}